\documentclass[12pt]{article}
\usepackage{amsmath,amssymb,amsthm}
\RequirePackage[dvips]{graphicx}
\usepackage{xcolor,float}
\usepackage{dsfont}
\usepackage{tikz,multirow,booktabs,adjustbox,hyperref}
\usepackage[superscript]{cite}
\usepackage{cite}
\usepackage{xcolor}
\definecolor{AmberDawn}{RGB}{255, 191, 0}
\usepackage[table,xcdraw,dvipsnames]{xcolor}
\definecolor{LightSlateBlue}{rgb}{0.52, 0.44, 1.0}
\definecolor{DarkSeaGreen}{rgb}{0.56, 0.74, 0.56}
\definecolor{LightSteelBlue}{rgb}{0.69, 0.77, 0.87}
\definecolor{MediumPurple}{rgb}{0.58, 0.44, 0.86}
\definecolor{MediumAquamarine}{rgb}{0.4, 0.8, 0.67}
\definecolor{LightSkyBlue}{rgb}{0.53, 0.81, 0.98}
\definecolor{SteelBlue}{rgb}{0.27, 0.51, 0.71}
\definecolor{Wheat}{rgb}{0.96, 0.87, 0.7}
\definecolor{LightCyan}{rgb}{0.88, 1.0, 1.0}
\usepackage[table,xcdraw,dvipsnames]{xcolor}

\definecolor{Khaki}{rgb}{0.94, 0.9, 0.55}
\definecolor{DarkOrchid}{rgb}{0.6, 0.2, 0.8}
\definecolor{MediumSpringGreen}{rgb}{0.0, 0.98, 0.6}
\definecolor{MidnightBlue}{rgb}{0.1, 0.1, 0.44}
\definecolor{MintCream}{rgb}{0.96, 1.0, 0.98}
\definecolor{LightGreen}{rgb}{0.56, 0.93, 0.56}
\definecolor{PowderBlue}{rgb}{0.69, 0.88, 0.9}
\definecolor{LightPink}{rgb}{1.0, 0.71, 0.76}
\definecolor{AntiqueWhite}{rgb}{0.98, 0.92, 0.84}
\definecolor{Beige}{rgb}{0.96, 0.96, 0.86}
\definecolor{PaleGreen}{rgb}{0.6, 0.98, 0.6}
\definecolor{Ivory}{rgb}{1.0, 1.0, 0.94}
\definecolor{Azure}{rgb}{0.94, 1.0, 1.0}
\definecolor{FloralWhite}{rgb}{1.0, 0.98, 0.94}
\definecolor{AliceBlue}{rgb}{0.94, 0.97, 1.0}
\definecolor{MistyRose}{rgb}{1.0, 0.89, 0.88}
\definecolor{GhostWhite}{rgb}{0.97, 0.97, 1.0}
\definecolor{LightYellow}{rgb}{1.0, 1.0, 0.88}
\definecolor{Linen}{rgb}{0.98, 0.94, 0.9}
\definecolor{OldLace}{rgb}{0.99, 0.96, 0.9}
\definecolor{Honeydew}{rgb}{0.94, 1.0, 0.94}
\definecolor{Seashell}{rgb}{1.0, 0.96, 0.93}
\definecolor{Snow}{rgb}{1.0, 0.98, 0.98}
\definecolor{Gainsboro}{rgb}{0.86, 0.86, 0.86}
\usepackage[xcdraw,table,dvipsnames]{xcolor}
\definecolor{SlateGray}{rgb}{0.44, 0.5, 0.56}
\definecolor{LightCoral}{rgb}{0.94, 0.5, 0.5}
\usepackage{makecell}
\theoremstyle{plain}
\newtheorem{theorem}{Theorem}[section]

\newtheorem{lemma}[theorem]{Lemma}

\theoremstyle{definition}

\newtheorem{claim}{Claim}

\begin{document}

\begin{center}
{\Large \bf  Resolving Open Problems on the Hyper-Zagreb\\[2mm] Index and its Chemical Applications }

 \vspace{8mm}

 {\bf Kinkar Chandra Das, Jayanta Bera\footnote{Corresponding author.}}

 \vspace{9mm}

 \baselineskip=0.20in

 {\it Department of Mathematics, Sungkyunkwan University, \\
 Suwon 16419, Republic of Korea\/} \\[2mm]
 E-mail: {\tt  kinkardas2003@gmail.com,~jayantabera@g.skku.edu}

\vspace{4mm}

 \end{center}

 \vspace{5mm}

 \baselineskip=0.20in

\begin{abstract}

Topological indices are numerical invariants derived from molecular graphs and play an important role in characterizing chemical compounds and predicting their properties. Among the earliest descriptors are the classical Zagreb indices introduced by Gutman and Trinajsti\'c in 1972. A more recent development is the hyper-Zagreb index ($HM$), defined as $HM(G)=\sum_{v_i v_j\in E(G)}(d_i+d_j)^2$, where $d_i$ denotes the degree of vertex $v_i$. In 2023, Hayat et al. posed an open problem concerning bounds on the $HM$ index under fixed vertex-connectivity or edge-connectivity, along with the characterization of the corresponding extremal graphs. In this work, the problem is resolved by determining the extremal graphs that maximize $HM$ index under these constraints. The investigation is further extended to several additional extremal problems, including graphs with a given number of leaves, chromatic number, and independence number. The associated extremal graphs are identified in each case. In addition, the chemical relevance of $HM$ is examined through QSPR studies. Finally, the conclusion is presented.

\bigskip

 \noindent
{\bf Keywords}: Extremal graph, Hyper-Zagreb  index,   Vertex-connectivity, Edge-connectivity, QSPR analysis.\\[1mm]
\noindent
 {\bf MSC}:  05C90, 05C07, 05C35.
\end{abstract}
\baselineskip=0.27in

\section{Introduction}

Mathematical chemistry is a multidisciplinary area that employs mathematical tools to investigate chemical structures, reactions, and properties. In particular, researchers apply graph-theoretic methods to analyze molecular systems with enhanced accuracy and efficiency by modeling molecules as graphs where atoms serve as vertices and chemical bonds as edges.

Topological indices are a key concept in this approach. These numerical invariants are obtained from the molecular graph and capture essential structural characteristics such as symmetry and connectivity. These invariants are widely utilized in quantitative structure–property relationship (QSPR) studies to  explain chemical properties by providing a compact mathematical representation of molecular structure. Through QSPR models, topological indices help to link molecular structure to experimental properties, aiding in drug design, materials discovery, and toxicity prediction.

Since Wiener's groundbreaking work in 1947,\cite{wiener47} many topological indices have been used in academic research to describe molecular structures using parameters like degree, eccentricity, and distance.\cite{Heliyon2020,dong11,das1666} Degree-based descriptors have been especially important among these, having a significant impact on both theoretical research and real-world applications.\cite{Shanmukha244,lucic09} In 1972, Gutman and Trinajsti\'c described the Zagreb indices,\cite{gut72} which are among the earliest degree-based topological indices. For a graph $G$ with vertex set $V(G) = \lbrace v_1, v_2, \ldots,
v_n \rbrace$ and edge set $E(G)$, the first Zagreb index ($M_1$) and the second Zagreb index ($M_2$ ) are formulated as follows:
\begin{eqnarray*}\label{H}
M_1(G)&=&\displaystyle{\sum\limits_{v_{i}\,v_{j} \in
E(G)}\,(d_i+d_j)},\quad
M_2(G)=\displaystyle{\sum\limits_{v_{i}\,v_{j} \in
E(G)}{d_i\,d_j}},
 \end{eqnarray*}
 where $d_i$ represents the degree of the vertex $v_i$. These indices have been very helpful in analyzing different molecular properties, such as complexity, chirality, and hetero-systems. These indices have been the subject of a substantial body of literature that covers both mathematical and chemical aspects. For example, Liu and Gutman gave some bounds on the $M_1$ and $M_2$ indices for connected graphs.\cite{liu066} Das and Gutman explored the $M_2$ index in the context of a graph and its complement.\cite{das044} Das compared the $M_1$ and $M_2$ indices.\cite{das100} Gutman et al. provided a comprehensive solution to the problem of finding the lowest value of $M_1$ among all graphs with a fixed number of pendant vertices.\cite{gut15} Liu et al. described the maximal graphs for $M_1$ within a class of graphs characterized by a certain vertex-connectivity.\cite{liu177} Over the past 30 years, Gutman and Das presented a comprehensive survey of the literature on $M_1$.\cite{gut04} For further results on the $M_1$ and $M_2$ indices, the reader may consult the relevant references.\cite{liu102,gutman13,nikolic03,xu11}.

 Recently, Shirdel et al. presented the hyper-Zagreb index ($HM$),\cite{shirdel13} which is a refinement of the first Zagreb index. The $HM$ index 
 is defined as
 \begin{eqnarray*}
 HM(G)=\displaystyle{\sum\limits_{v_{i}\,v_{j} \in
E(G)}{(d_i+d_j)^2}}.
 \end{eqnarray*}
Mirajkar et al. found that $HM$ provides higher accuracy in modelling $DHVAP$, $HVAP$, entropy, and acentric factor of octanes.\cite{mirajkar21}
The effectiveness of the $HM$ index in explaining some physicochemical properties of monocarboxylic acids was examined in the literature.\cite{hayat2023}
Elumalai et al. provided new bounds for the $HM$ index of connected graphs.\cite{elumalai18,elumalai19,elumalai22}
Liu and Tang established the maximum value of the $HM$ index for trees, unicyclic graphs, and bicyclic graphs with a specified matching number, and further characterized the corresponding extremal graphs.\cite{liu200}
Raza et al. examined the extremal trees with respect to the $HM$ index under constraints on maximum degree.\cite{raza22}
For further results on the $HM$ index, the reader may consult the relevant references.\cite{vetrik,zhong20}
Hayat et al. posed some interesting open problems on $HM$,\cite{hayat2023} as reported below:

\begin{itemize}
    \item \textbf{Problem:} Determine sharp bounds for $HM$ of graphs with specified   vertex-connectivity or edge-connectivity, and classify  extremal graphs.
   \item \textbf{Problem:} Find sharp bounds with respect to $HM$ of graphs with prescribed diameter or matching number, and identify the extremal graphs.
\end{itemize}

    In recent studies, maximal graphs with respect to $HM$ for a fixed graph order and either a prescribed diameter or matching number were investigated.\cite{vetrik23,walia24} In this paper, we classify graphs with the highest $HM$ index by considering those with prescribed vertex-connectivity, edge-connectivity, number of leaves, chromatic number, or independence number. Furthermore, we analyze the contribution of the $HM$  index to the modeling of structure–property relationships of nonane isomers.
\vspace*{3mm}

\section{Mathematical results on $HM$ of graphs.}
 In this section, we explore the mathematical properties of $HM$ for graphs. 
 \begin{lemma} \label{a1} $(i)$ $HM(G)>HM(G-e)$ for any edge $e\in E(G)$, and $(ii)$ $HM(G+e)>HM(G)$ for $e\notin E(G)$.
 \end{lemma}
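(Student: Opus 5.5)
Note first that part $(ii)$ is part $(i)$ applied to the graph $G+e$, since $(G+e)-e=G$. So it suffices to prove $(i)$, and I will argue directly by comparing the edge sums of $G$ and $G'=G-e$.

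Let $e=v_iv_j$. Passing from $G$ to $G'$ changes the degrees of exactly two vertices: $d'_i=d_i-1$ and $d'_j=d_j-1$, while every other degree is unchanged. I will split $E(G)$ into three parts.
\begin{itemize}
\item The edge $e$ itself. It contributes $(d_i+d_j)^2$ to $HM(G)$ and nothing to $HM(G')$. Because $d_i,d_j\ge 1$, this contribution is at least $4>0$.
\item The edges incident to $v_i$ or $v_j$ other than $e$. Each such edge has at least one endpoint whose degree drops by one, and possibly both (if it joins $v_i$ to $v_j$; this cannot happen in a simple graph since $e$ is the only such edge). All degree sums in these terms are positive. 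Hence each term $(d_k+d_l)^2$ in $G$ is strictly larger than the corresponding term in $G'$.
\item All remaining edges. Their terms are identical in $G$ and $G'$.
\end{itemize}

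Summing the three parts gives
\begin{equation*}
HM(G)-HM(G')\ge (d_i+d_j)^2>0,
\end{equation*}
which proves $(i)$. Part $(ii)$ then follows immediately by applying $(i)$ to $G+e$.

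The key point is the monotonicity of $x\mapsto x^2$ on positive reals. The degree sums in $G'$ are at least $1$, since each edge still has endpoints of degree at least $1$, so no sign issue arises. There is no real obstacle here; the only care needed is to keep the term for $e$ itself separate from the terms for the edges adjacent to it.
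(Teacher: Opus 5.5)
Your argument is correct: deleting $e=v_iv_j$ removes the term $(d_i+d_j)^2>0$, every other term either stays the same or strictly decreases because $x\mapsto x^2$ is increasing for $x\ge 0$, and part $(ii)$ is part $(i)$ applied to $G+e$. The paper states this lemma without proof, treating it as immediate from the definition, and your edge-by-edge comparison is the routine verification it leaves implicit.
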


The split graph $SP(n,\beta)$ is a graph of order $n$ with independence number $\beta$, defined as
    $$SP(n,\beta)=\overline{K_{\beta}\vee (n-\beta)\,K_1},$$
 where $\vee$ denotes the join of graphs and the overline indicates the graph
 complement. Applying Lemma \ref{a1}, it is simple to observe the following result:
  \begin{theorem} Let $G$ be a graph of order $n$ with independence number $\beta$. Then
    $$HM(G)\leq 2\,(n-\beta)\,(n-\beta-1)\,(n-1)^2+\beta\,(n-\beta)\,(2n-\beta-1)^2$$
  with equality iff $G\cong SP(n,\beta)$.
  \end{theorem}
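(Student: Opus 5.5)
Let $S$ be a maximum independent set of $G$, so $|S|=\beta$, and put $T=V(G)\setminus S$. The plan is to show that adding edges never hurts, so the optimum is the densest graph that still has $S$ as an independent set, and then to compute its $HM$ value directly.

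First note what the notation means. The graph $SP(n,\beta)$ is meant to be $\overline{K_\beta}\vee K_{n-\beta}$: a clique on $n-\beta$ vertices, an independent set on $\beta$ vertices, and every edge between the two parts. This is the reading consistent with the bound.

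Given $G$, let $G'$ be the graph obtained from $G$ by adding every missing edge with at least one endpoint in $T$. Then $G'\cong \overline{K_\beta}\vee K_{n-\beta}$.

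\begin{itemize}
\item $S$ is still independent in $G'$.
\item Any set containing a vertex of $T$ together with any other vertex spans an edge in $G'$.
\item Hence $\alpha(G')=\beta$, so $G'$ stays in the class of graphs with independence number $\beta$.
\end{itemize}

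Applying Lemma~\ref{a1}(ii) once for each added edge gives $HM(G)\le HM(G')$. The inequality is strict whenever at least one edge was added, that is, whenever $G\neq G'$.

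Now compute $HM(G')$ directly.

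\begin{itemize}
\item Every vertex of $T$ is adjacent to all other vertices, so it has degree $n-1$.
\item Every vertex of $S$ is adjacent exactly to $T$, so it has degree $n-\beta$.
\item There are $\binom{n-\beta}{2}$ edges inside $T$. Each contributes $(2n-2)^2=4(n-1)^2$, for a total of $2(n-\beta)(n-\beta-1)(n-1)^2$.
\item There are $\beta(n-\beta)$ edges between $S$ and $T$. Each contributes $(n-1+n-\beta)^2=(2n-\beta-1)^2$.
\end{itemize}

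Adding these two totals gives exactly the stated bound.

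For the equality case, suppose $G$ attains the bound. Then no edge was added in the step above, so $G=G'\cong SP(n,\beta)$. Conversely, $SP(n,\beta)$ attains the bound by the computation. The only subtle point is checking that the completed graph still has independence number exactly $\beta$, which the bullet list above settles. The rest is routine.
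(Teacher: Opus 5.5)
Your proposal is correct and follows the paper's route. The paper gives no written proof beyond ``apply Lemma~\ref{a1}'', and your argument is exactly that: saturate every edge with an endpoint outside a maximum independent set, use strict edge-monotonicity for the bound and the equality case, and compute $HM$ of the resulting graph directly.

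You are also right to read $SP(n,\beta)$ as $\overline{K_\beta}\vee K_{n-\beta}$. Taken literally, the paper's formula $\overline{K_\beta\vee (n-\beta)K_1}$ is the disconnected graph $\overline{K_\beta}\cup K_{n-\beta}$; the intended expression is $\overline{K_\beta\cup (n-\beta)K_1}$.
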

  
The \textbf{Tur\'an graph}, represented by $T(n,r)$, is a complete $r$-partite graph on $n$ vertices. More specifically, the vertex set is divided into $r$ disjoint independent parts such that the sizes of any two parts differ by at most one. That is, if the part sizes are $n_1, n_2, \dots, n_r$, then
\[
\sum_{i=1}^r n_i = n \quad \text{and} \quad |n_i - n_j| \leq 1 \quad \text{for all } i, j.
\]
Each vertex is directly connected to all vertices outside its own part. We write $K_{n_1, n_2, \dots, n_r}$ to represent a complete $r$-partite graph with parts of sizes $n_1, n_2, \dots, n_r$.
We now describe the maximal graphs for $HM$ within a class of graphs identified by a specific chromatic number.
 \begin{theorem} Let $G$ be a graph of order $n$ with chromatic number $r$. Then
 \begin{align}
 HM(G) &\leq t\,(r-t)\,\left\lfloor \frac{n}{r} \right\rfloor
 \left\lceil \frac{n}{r} \right\rceil \left(2n - \left\lceil
 \frac{n}{r} \right\rceil- \left\lfloor \frac{n}{r}
 \right\rfloor\right)^2+4\,\binom{r-t}{2} \left\lfloor \frac{n}{r}
 \right\rfloor^2 \left(n -\left\lfloor \frac{n}{r} \right\rfloor\right)^2\nonumber\\
 &\quad\quad + 4\,\binom{t}{2} \left\lceil \frac{n}{r} \right\rceil^2
 \left(n - \left\lceil \frac{n}{r} \right\rceil\right)^2,\label{1t1}
 \end{align}
 where $n=r\left\lfloor \frac{n}{r}\right\rfloor+t$, $0\leq t<r$.
 Furthermore, the equality occurs in (\ref{1t1}) iff $G \cong T(n,r)$.
 \end{theorem}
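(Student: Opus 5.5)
We first reduce to complete $r$-partite graphs, then optimize over the part sizes. Let $G$ be a graph of order $n$ with chromatic number $r$, and fix a proper $r$-colouring with colour classes $V_1,\dots,V_r$ of sizes $n_1,\dots,n_r$. Each $n_i\ge 1$, since otherwise $G$ would be $(r-1)$-colourable. Adding every missing edge between distinct classes produces the complete $r$-partite graph $K_{n_1,\dots,n_r}$. This graph still has chromatic number $r$: it contains $K_r$, and the classes give a proper $r$-colouring. By repeated use of Lemma \ref{a1}$(ii)$, $HM(G)\le HM(K_{n_1,\dots,n_r})$, with equality only if no edge was added, that is, only if $G\cong K_{n_1,\dots,n_r}$. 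So it suffices to show that among complete $r$-partite graphs of order $n$, the Tur\'an graph $T(n,r)$ is the unique maximizer of $HM$, and then to evaluate $HM(T(n,r))$.

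In $K_{n_1,\dots,n_r}$ every vertex of $V_i$ has degree $n-n_i$. Hence
\[
HM(K_{n_1,\dots,n_r})=\sum_{1\le i<j\le r} n_i\,n_j\,(2n-n_i-n_j)^2 .
\]
For the evaluation at the Tur\'an sizes, write $a=\lfloor n/r\rfloor$, so there are $t$ parts of size $a+1=\lceil n/r\rceil$ (when $t>0$) and $r-t$ parts of size $a$. The pairs split into three types:
\begin{itemize}
\item $t(r-t)$ mixed pairs, each contributing $a(a+1)(2n-2a-1)^2$;
\item $\binom{r-t}{2}$ pairs of small parts, each contributing $a^2(2n-2a)^2=4a^2(n-a)^2$;
\item $\binom{t}{2}$ pairs of large parts, each contributing $4(a+1)^2(n-a-1)^2$.
\end{itemize}
Summing these gives exactly the right-hand side of (\ref{1t1}). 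When $t=0$ the floor and ceiling coincide, and the formula is consistent with this count.

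The main step is a balancing argument. Suppose two parts have sizes $p\ge q+2$. We move one vertex from the larger part to the smaller one, giving sizes $p-1,q+1$, and show that $HM$ strictly increases. Set $s=p+q$, which is fixed under the move. The terms involving these two parts are $pq(2n-s)^2$, which is the pair $(i,j)$ itself, plus, for each other part $k$ of size $m=n_k$,
\[
f(p)+f(q),\qquad \text{where } f(x)=x\,m\,(2n-x-m)^2 .
\]
All other terms are unchanged.

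\begin{itemize}
\item The product $pq$ strictly increases, since $(p-1)(q+1)-pq=p-q-1>0$.
\item For the remaining terms, it suffices that $g(x)=x(c-x)^2$, with $c=2n-m$, is concave on the relevant range. Then $g(p)+g(q)\le g(p-1)+g(q+1)$, because $q<q+1\le p-1<p$ with equal sums.
\end{itemize}

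We have $g''(x)=6x-4c$. Since $x\le n-m$ and $c=2n-m$, we get $6x-4c\le 6n-6m-8n+4m=-2n-2m<0$, so $g$ is concave. If $r=2$ there is no third part and only the $pq$ term matters, so the strict increase still holds.

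Thus any unbalanced partition can be strictly improved. Since the set of partitions is finite, the maximum is attained only at the balanced partition, namely $T(n,r)$. Combined with the reduction above, this gives the inequality and shows that equality holds iff $G\cong T(n,r)$.

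I expect the main obstacle to be verifying this concavity and monotonicity cleanly over the full integer range, rather than the bookkeeping in the final formula.
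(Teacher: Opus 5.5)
Your proof is correct and follows essentially the same route as the paper: reduce to $K_{n_1,\dots,n_r}$ via Lemma~\ref{a1}, then show that moving a vertex between two parts whose sizes differ by at least $2$ strictly increases $HM$, so only the balanced partition $T(n,r)$ can be the maximizer. The only difference is in how the transfer step is verified. The paper moves a vertex from a largest to a smallest part and expands the difference algebraically. You instead bound the cross terms with each third part using concavity of $g(x)=x(c-x)^2$ on $[0,n-m]$, which is cleaner, and you also check the closed form of $HM(T(n,r))$, which the paper leaves implicit.
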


 \begin{proof} Since $G$ has a graph of order $n$ with chromatic number $r$, $G$ can be partitioned into $r$ independent sets of sizes
 $n_1, n_2, \ldots, n_r$. Without loss of generality, suppose $n_1 \geq n_2 \geq \ldots \geq n_r\geq 1$. According to Lemma \ref{a1},
 we say that $HM(G)\leq HM(K_{n_1,n_2,\ldots,n_r})$ with equality iff $G\cong K_{n_1, n_2, \ldots,
 n_r}$. If $n_1-n_r\leq 1$, then $G\cong T(n,r)$ and so the equality occurs in (\ref{1t1}). Otherwise, $n_1-n_r\geq
 2$. For $r=2$, we obtain
 \begin{align*}
 HM(K_{n_1-1,n_2+1})-HM(K_{n_1,n_2})&=(n_1-1)\,(n_2+1)\,(2n-n_1-n_2)^2-n_1\,n_2\,(2n-n_1-n_2)^2\\
                                  &=(n_1-n_r-1)\,n^2>0
 \end{align*}
 as $n_1+n_2=n$. For $r\geq 3$, one can easily see that
 \begin{align*}
 &HM(K_{n_1-1, n_2, \ldots, n_r+1})-HM(K_{n_1, n_2, \ldots, n_r})\\
 =&(n_1-1)\,(n_r+1)\,(2n-n_1-n_r)^2-n_1\,n_r\,(2n-n_1-n_r)^2+(n_1-1)\,\sum\limits^{r-1}_{i=2}\,n_i\,(2n-n_1-n_i+1)^2\\
 &-n_1\,\sum\limits^{r-1}_{i=2}\,n_i\,(2n-n_1-n_i)^2+(n_r+1)\,\sum\limits^{r-1}_{i=2}\,n_i\,(2n-n_r-n_i-1)^2-n_r\,\sum\limits^{r-1}_{i=2}\,n_i\,
 (2n-n_r-n_i)^2\\[2mm]
 =&(n_1-n_r-1)\,(2n-n_1-n_r)^2-\sum\limits^{r-1}_{i=2}\,n_i\,(2n-n_1-n_i+1)^2+\sum\limits^{r-1}_{i=2}\,n_i\,(2n-n_r-n_i-1)^2\\[2mm]
 &+n_1\,\sum\limits^{r-1}_{i=2}\,n_i\,\Big((2n-n_1-n_i+1)^2-(2n-n_1-n_i)^2\Big)\\[2mm]
 &-n_r\,\sum\limits^{r-1}_{i=2}\,n_i\,\Big((2n-n_r-n_i)^2-(2n-n_r-n_i-1)^2\Big)\\[2mm]
 =&(n_1-n_r-1)\,(2n-n_1-n_r)^2+\sum\limits^{r-1}_{i=2}\,n_i\,\Big((2n-n_r-n_i-1)^2-(2n-n_1-n_i+1)^2\Big)\\[2mm]
 &+n_1\,\sum\limits^{r-1}_{i=2}\,n_i\,\Big(2\,(2n-n_1-n_i)+1\Big)-n_r\,\sum\limits^{r-1}_{i=2}\,n_i\,\Big(2\,(2n-n_r-n_i)-1\Big)\\[2mm]
 >&\sum\limits^{r-1}_{i=2}\,n_i\,\Big(2n_1\,(2n-n_1-n_i)-2n_r\,(2n-n_r-n_i\Big)+\sum\limits^{r-1}_{i=2}\,n_i\,(n_1-n_r)\\[2mm]
 >&2\,\sum\limits^{r-1}_{i=2}\,n_i\,(n_1-n_r)\,(2n-n_i-n_1-n_r)\geq 0.
 \end{align*}
 Thus we have $HM(K_{n_1-1, n_2, \ldots, n_r+1})>HM(K_{n_1, n_2,\ldots,n_r})$. From this result, we obtain the following:
 \begin{align*}
 HM(K_{n_1, n_2, \ldots, n_r})<HM(K_{n_1-1, n_2, \ldots,
 n_r+1})<\cdots<HM(T(n,r)).
 \end{align*}
 The inequality in (\ref{1t1}) holds strictly. Thus, the theorem is proved.
 \end{proof}

  Let $S(n_1,\,n_2, \ldots,\,n_{n-p})$ be a graph with 
$n$ vertices, consisting of a clique $K_{n-p}$ with $n-p$ vertices and $p$ leaves, where
 each $n_i$ represents the number of leaves attached to the $i$-th vertex of the
 clique $K_{n-p}$ and $ \sum_{i=1}^{n-p}\, n_i=p$ with $n_i\geq 0$ for $1\leq i\leq n-p$. A special case of this graph is when all $n_i = 0$,
 i.e., no leaves are attached. In this case, the graph is simply the complete graph $K_n$, and we write: $S(0,\,0, \ldots,\,0)\cong K_n.$
 The pineapple graph $K_{n,p}$ is a graph obtained by identifying one vertex of $K_{n-p}$ with central vertex of
 star $K_{1,p}$, that is, $K_{n,p}\cong S(p,\,\underbrace{0, \ldots,\,0}_{n-p-1})$. One can easily see that
    $$HM(K_{n,p})=(n-p-1)\,(2n-p-2)^2+4\,{n-p-1\choose 2}\,(n-p-1)^2+p\,n^2.$$
 We now present an upper bound on $HM(G)$ of graph $G$ in terms of
 $n$ and the number of leaves $p$, and identify the maximal graphs.
 \begin{theorem}\label{p1} Let $G$ be a graph of order $n$ with $p\,(\leq n-2)$ leaves. Then
 \begin{align}
 HM(G)\leq (n-p-1)\,(2n-p-2)^2+4\,{n-p-1\choose 2}\,(n-p-1)^2+p\,n^2\label{pkd1}
 \end{align}
 with equality iff $G\cong K_{n,p}$.
 \end{theorem}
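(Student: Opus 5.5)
The plan is to reduce to the family $S(n_1,\ldots,n_{n-p})$ using Lemma \ref{a1}, and then show that concentrating all leaves on one clique vertex maximizes $HM$ via a convexity argument on the degree sequence.

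\textbf{Step 1 (reduction).} Let $L$ be the set of $p$ leaves and $N=V(G)\setminus L$, so $k:=|N|=n-p\geq 2$.

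First, if some leaf $u$ is adjacent to another leaf $u'$, delete the edge $uu'$ and join both $u$ and $u'$ to a fixed vertex $w\in N$. The deleted edge contributed only $(1+1)^2=4$. Each new pendant edge contributes $(d_w+1)^2>4$, and every other edge at $w$ has its contribution increased. So $HM$ strictly increases and the number of leaves is unchanged. After repeating this, every leaf has its neighbour in $N$.

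Next, add all missing edges inside $N$. No vertex of $N$ becomes a leaf, and no leaf gains a neighbour, so the leaf count stays $p$. By Lemma \ref{a1}(ii), $HM$ strictly increases unless $N$ was already a clique.

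Hence $HM(G)\le HM(S(n_1,\ldots,n_k))$ for some $n_i\ge 0$ with $\sum n_i=p$. Equality forces $G\cong S(n_1,\ldots,n_k)$.

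\textbf{Step 2 (rewriting $HM$ of $S$).} In $S(n_1,\ldots,n_k)$ the clique vertex $v_i$ has degree $d_i=k-1+n_i$, and $\sum_i d_i=k(k-1)+p$ is fixed. Write
\[
HM=\sum_{1\le i<j\le k}(d_i+d_j)^2+\sum_{i=1}^k n_i(d_i+1)^2 .
\]
The clique part expands as $\sum_{i<j}(d_i+d_j)^2=(k-2)\sum_i d_i^2+\big(\sum_i d_i\big)^2$, and the last square is constant. Therefore
\[
HM=C+\sum_{i=1}^k f(d_i),\qquad f(x)=(k-2)x^2+(x-k+1)(x+1)^2 ,
\]
where $C$ depends only on $n$ and $p$.

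\textbf{Step 3 (convexity).} We have $f''(x)=2(k-2)+6x+4-4k$. For $x\ge k-1$ this gives $f''(x)\ge 2(k-2)+6(k-1)+4-4k=4k-6>0$, since $k\ge 2$. So $f$ is strictly convex on the relevant range $x\geq k-1$.

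Now suppose two indices have $n_i\ge n_j\ge 1$. Moving one leaf from $v_j$ to $v_i$ replaces $(d_i,d_j)$ by $(d_i+1,d_j-1)$, keeps the sum, and keeps both values $\ge k-1$. By strict convexity,
\[
f(d_i+1)+f(d_j-1)>f(d_i)+f(d_j),
\]
so $HM$ strictly increases. Iterating, the maximum is attained only when all $p$ leaves sit on a single clique vertex, that is, at $S(p,0,\ldots,0)=K_{n,p}$. Substituting its degrees gives the value in (\ref{pkd1}), which matches the formula for $HM(K_{n,p})$ stated before the theorem.

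\textbf{Expected obstacle.} The delicate part is Step 1: making sure the transformations preserve exactly $p$ leaves, including the leaf–leaf edge case and the case where $G$ is disconnected. Step 3 replaces the messy direct difference computation with the convexity of $f$, which makes the strictness, and hence the uniqueness of the extremal graph $K_{n,p}$, straightforward.
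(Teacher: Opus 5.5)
Your proof is correct, and its skeleton is the same as the paper's: use Lemma~\ref{a1} to reduce to $S(n_1,\ldots,n_{n-p})$, then move leaves one at a time onto a single clique vertex. The real difference is in the leaf-moving step.

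The paper (Claim~\ref{h11}) expands $HM(S(n_1+1,\ldots,n_t-1,\ldots))-HM(S(n_1,\ldots,n_t,\ldots))$ edge class by edge class and bounds it below. In doing so it tacitly takes $n_1$ to be the largest pile. You instead use the fixed degree sum $\sum_i d_i=k(k-1)+p$ to write $HM=C+\sum_i f(d_i)$. The shift inequality then follows in one line from strict convexity, for any pair with $n_i\ge n_j\ge 1$. Strictness, and hence uniqueness of $K_{n,p}$, comes for free. Your Step 1 is also more complete than the paper's reduction, which never deals with two leaves adjacent to each other (a $K_2$ component). Rerouting such a pair to $w\in N$ handles this cleanly.

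There are two small corrections.

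\emph{The second derivative.} Expanding gives $f(x)=x^3+x^2+(3-2k)x-(k-1)$, so $f''(x)=6x+2$, not $6x-2k$. Convexity on $x\ge k-1$ still holds, and in fact on all $x\ge 0$.

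\emph{New leaves when $k=2$.} Your claim that completing $N$ to a clique creates no new leaf is false when $k=2$ and a vertex of $N$ is isolated; for example, $P_3\cup K_1\mapsto K_{1,3}$. Relatedly, $K_{n,n-2}\cong K_{1,n-1}$ has $n-1$ leaves. So for $p=n-2$ the bound is never attained, and the equality clause is only vacuously true. This does not damage your proof. You push $G$ itself through a chain of strict $HM$-increases, and Steps 2--3 use only $\sum_i n_i=p$, never the leaf count. The ``expected obstacle'' you flag, preserving exactly $p$ leaves, is therefore unnecessary in your framing. It matters only for an argument through an $HM$-maximizer inside the class, as in the paper.
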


 \begin{proof} If $p=0$, then from Lemma \ref{a1}, it is obvious that $HM(G)\leq HM(K_n)=2n\,(n-1)^3$ with equality iff
 $G\cong K_n$. Otherwise, $p\geq 1$. Let $H$ be a graph with 
$n$ vertices and the number of leaves $p\,(\geq 1)$ such that the hyper-Zagreb
 index $HM(H)$ is maximum. So $HM(G)\leq HM(H)$ with equality iff $G\cong H$. Let $q=n-p$. Then we have $q\geq 2$.
 We take a set of leaves $S\subseteq V(H)$ with
 $|S|=p$. By Lemma \ref{a1}, we obtain $H[V(H)-S]\cong K_{n-p}$. Thus the graph $H$ has the form $H\cong S(n_1,\,n_2, \ldots,\,n_{n-p})$. If
 $n_1=p$, then $n_2=n_3=\cdots=n_{n-p}=0$ and hence
 \begin{align*}
 HM(G)&\leq HM(H)=HM(S(n_1,\,n_2, \ldots,\,n_{n-p}))=HM(K_{n,p})\\
 &=(n-p-1)\,(2n-p-2)^2+4\,{n-p-1\choose 2}\,(n-p-1)^2+p\,n^2
 \end{align*}
 with equality iff $G\cong K_{n,p}$. Otherwise, $n_1<p$.
 Thus we have $n_2\geq 1$. Without loss of generality, we suppose
 that there is a positive integer $t\,(\geq 2)$ with $n_t\geq 1$ and $n_{t+1}=n_{t+2}=\cdots=n_{n-p}=0$.

 \vspace*{3mm}

 \noindent
 \begin{claim} \label{h11}
  $$HM\Big(S(n_1+1,\,n_2,\ldots,\,n_t-1,\ldots,\,n_{n-p})\Big)>HM\Big(S(n_1,\,n_2,\ldots,\,n_t,\ldots,\,n_{n-p})\Big).$$
 \end{claim}

 \noindent
 {\bf Proof of Claim \ref{h11}.} Now,
 \begin{align*}
 &HM\Big(S(n_1+1,\,n_2,\ldots,\,n_t-1,\ldots,\,n_{n-p})\Big)-HM\Big(S(n_1,\,n_2,\ldots,\,n_t,\ldots,\,n_{n-p})\Big)\\
 =&(n_1+1)\,(n+n_1-p+1)^2-n_1\,(n+n_1-p)^2+(n_t-1)\,(n+n_t-p-1)^2-n_t\,(n+n_t-p)^2\\
 &+\sum_{\substack{i = 2 \\ i \ne t}}^{n - p}
\,\Big((2n-2p+n_1+n_i-1)^2-(2n-2p+n_1+n_i-2)^2\Big)\\
 &+\sum_{\substack{i = 2 \\ i \ne t}}^{n - p}\,\Big((2n-2p+n_t+n_i-3)^2-(2n-2p+n_t+n_i-2)^2\Big)\\
 =&\Big((n+n_1-p+1)^2-(n+n_1-p)^2\Big)\,n_1+\Big((n+n_t-p-1)^2-(n+n_t-p)^2\Big)\,(n_t-1)\\
 &+(n+n_1-p+1)^2-(n+n_t-p)^2+\sum_{\substack{i = 2 \\ i \ne t}}^{n - p}\,\Big(2\,(2n-2p+n_1+n_i-2)+1\Big)\\
 &-\sum_{\substack{i = 2 \\ i \ne t}}^{n - p}\,\Big(2\,(2n-2p+n_t+n_i-2)-1\Big)\\
 &=(2n+2n_1-2p+1)\,n_1-(2n+2n_t-2p-1)\,(n_t-1)+(2n+2n_1-2p+1)\\
 &~~~~~~~~~~~~~~~~~+2\,\sum_{\substack{i = 2 \\ i \ne t}}^{n - p}\,\Big(n_1-n_t+1\Big)\\
 &>0
 \end{align*}
 as $n-p\geq 2$. Thus, the {\bf Claim \ref{h11}} is proved.

 \vspace*{3mm}

 By applying {\bf Claim \ref{h11}} repeatedly, we get the following inequalities:
 \begin{align*}
 &HM\Big(S(n_1,\,n_2,\ldots,\,n_t,\ldots,\,n_{n-p})\Big)<HM\Big(S(n_1+1,\,n_2,\ldots,\,n_t-1,\ldots,\,n_{n-p})\Big)<\cdots\\
    &<HM\Big(S(p-1,\,1,\underbrace{0,\ldots,\,0}_{n-p-2})\Big)<HM\Big(S(p,\underbrace{0,\ldots,\,0}_{n-p-1})\Big)=HM\Big(K_{n,p}\Big).
 \end{align*}
 Using the above result, we obtain
   $$HM(G)\leq HM(H)=HM\Big(S(n_1,n_2,\ldots,n_{n-p})\Big)<HM\Big(K_{n,p}\Big).$$
 The result in (\ref{pkd1}) holds strictly. Thus, the theorem is proved.
 \end{proof}

We now prove that for a graph $G$ with $n$ vertices and vertex connectivity $k$, the graph $(K_1 \cup K_{n-k-1}) \vee K_k$ maximizes the value $HM(G)$.
 \begin{theorem}\label{pp1} Let $G$ be a graph of order $n$ with vertex connectivity $k$. Then
 \begin{align}
 HM(G)\leq 4\,{k\choose 2}\,(n-1)^2+4\,{n-k-1\choose 2}\,(n-2)^2+k\,(n-k-1)\,(2n-3)^2+k\,(n+k-1)^2\label{p1kd1}
 \end{align}
 with equality iff $G\cong (K_1\cup K_{n-k-1})\vee K_k$.
 \end{theorem}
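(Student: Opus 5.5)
The plan is the standard saturation-then-balancing argument. Let $G$ have vertex connectivity $k$, and let $S$ be a vertex cut with $|S|=k$. Then $G-S$ is disconnected. Split the components of $G-S$ into two nonempty groups with vertex sets $A$ and $B$, where $|A|=n_1\geq 1$, $|B|=n_2\geq 1$ and $n_1+n_2=n-k$.

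\textbf{Step 1: saturate.} Add every missing edge inside $S\cup A$ and inside $S\cup B$. The result is $G'=K_k\vee (K_{n_1}\cup K_{n_2})$. Its vertex connectivity is still exactly $k$, since $S$ remains a cut and removing fewer than $k$ vertices leaves some vertex of $S$, which is adjacent to everything. By Lemma \ref{a1}, $HM(G)\leq HM(G')$, with equality iff $G\cong G'$.

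\textbf{Step 2: compute $HM(G')$.} We may assume $n_1\leq n_2$. In $G'$ the degrees are $n-1$ on $S$, $n_1+k-1$ on $A$, and $n_2+k-1$ on $B$. Hence
\begin{align*}
HM(G')&=4\binom{k}{2}(n-1)^2+k\,n_1\,(n+n_1+k-2)^2+k\,n_2\,(n+n_2+k-2)^2\\
&\quad+4\binom{n_1}{2}(n_1+k-1)^2+4\binom{n_2}{2}(n_2+k-1)^2.
\end{align*}
With $n_1=1$, $n_2=n-k-1$ this becomes $4\binom{k}{2}(n-1)^2+k(n+k-1)^2+k(n-k-1)(2n-3)^2+4\binom{n-k-1}{2}(n-2)^2$, which is exactly the right-hand side of (\ref{p1kd1}).

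\textbf{Step 3: balancing.} It remains to show that $f(n_1)=HM(K_k\vee(K_{n_1}\cup K_{n_2}))$, with $n_2=n-k-n_1$, is strictly maximized over $1\leq n_1\leq n_2$ at $n_1=1$. I would mimic Claim \ref{h11} and prove that for $2\leq n_1\leq n_2$, moving one vertex from $A$ to $B$ strictly increases $HM$:
$$HM\big(K_k\vee(K_{n_1-1}\cup K_{n_2+1})\big)>HM\big(K_k\vee(K_{n_1}\cup K_{n_2})\big).$$
Iterating this inequality gives $f(n_1)<f(1)$ whenever $n_1\geq 2$, and that settles the equality case.

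\textbf{Main obstacle.} The difficulty is the algebra in this difference. One route is to expand it as a polynomial in $n_1,n_2,k$ and group terms. The two clique sums contribute roughly $4\big[\binom{n_2+1}{2}(n_2+k)^2-\binom{n_2}{2}(n_2+k-1)^2\big]-4\big[\binom{n_1}{2}(n_1+k-1)^2-\binom{n_1-1}{2}(n_1+k-2)^2\big]$. Because the function $\binom{x}{2}(x+k-1)^2$ is convex and increasing, this part is $\geq$ a positive multiple of $(n_2-n_1+1)$. The cross terms $k\,x\,(n+x+k-2)^2$ are also convex in $x$, so their change is nonnegative when $n_2\geq n_1$.

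A cleaner route is to observe that $f$, viewed as a function of a real variable $n_1$ with $n_2=n-k-n_1$, is a sum of functions $g(n_1)+g(n-k-n_1)$ with $g$ convex. Then $f$ is convex and symmetric about $(n-k)/2$, so its maximum over $[1,n-k-1]$ is attained at the endpoints $n_1=1$ or $n_1=n-k-1$, which give the same graph. I would try the convexity route first. The points to check are that $g(x)=k x (n+x+k-2)^2+4\binom{x}{2}(x+k-1)^2$ is convex for $x\geq 1$ (its second derivative is visibly positive), and that strictness holds, which requires $n-k-1>1$; otherwise only one configuration exists anyway.
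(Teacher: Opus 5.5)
Your proposal is correct. Steps 1 and 2 agree with the paper's proof. The paper phrases Step 1 through an extremal graph $H$ and asserts that $H-S$ has exactly two components, which amounts to your grouping of the components. You also verify that $K_k\vee(K_{n_1}\cup K_{n_2})$ still has vertex connectivity exactly $k$, which the paper leaves implicit.

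The real difference is in the balancing step. The paper proves Claim~\ref{h1} (moving a vertex from the smaller clique to the larger one increases $HM$) by expanding the difference term by term. It then pushes the result through a chain of ad hoc inequalities, with a separate subcase $k_2=k_1$. Your convexity route replaces all of this with a single observation. Writing $f(n_1)=4\binom{k}{2}(n-1)^2+g(n_1)+g(n-k-n_1)$ with $g$ strictly convex on $[1,\infty)$, $f$ is strictly convex and symmetric on $[1,n-k-1]$. Hence the maximum is attained only at the endpoints, and both endpoints give the same graph. This is shorter and explains why the most unbalanced split wins. It also recovers the paper's discrete claim immediately, since $f(n_1-1)-f(n_1)=\Delta g(n_2)-\Delta g(n_1-1)>0$, where $\Delta g(x)=g(x+1)-g(x)$ is strictly increasing. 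The paper's method, by contrast, never leaves integer arithmetic.

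The only thing you still need to write out is the convexity of $g$. It follows from $(uv)''=u''v+2u'v'+uv''$ applied to two products:
\begin{itemize}
\item $u=kx$, $v=(x+n+k-2)^2$;
\item $u=2x(x-1)$, $v=(x+k-1)^2$.
\end{itemize}
For $x\ge 1$ all these factors are nonnegative, nondecreasing and convex. The second product has second derivative at least $4(x+k-1)^2$. So $g''>0$ on $[1,\infty)$, except at the single point $x=1$ when $k=0$, and that still gives strict convexity.

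Both your argument and the paper's start from a vertex cut, which does not exist when $k=n-1$. That case is trivial: then $G\cong K_n$, and the bound reduces to $HM(K_n)=2n(n-1)^3$.
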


 \begin{proof} Let $H$ be a graph of order $n$ and vertex connectivity $k$ such that the hyper-Zagreb index $\mathrm{HM}(H)$ is
 maximum. So $HM(G)\leq HM(H)$ with equality iff $G\cong H$.
 Consider a vertex cut $S\subseteq V(H)$ with $|S|=k$, for which the removal of $S$ disconnects the graph $H$. Since $H$ maximizes
 the hyper-Zagreb index $\mathrm{HM}(H)$ among all such graphs of order $n$ with vertex connectivity $k$, it follows that $H-S$ must have
 exactly two connected components. Denote these components by $H_1$ and $H_2$, with $|V(H_1)| = k_1$ and $|V(H_2)| =n-k-k_1$. Without loss
 of generality, we suppose that $k_1 \leq n-k-k_1$, that is, $1\leq k_1\leq \lfloor\frac{n-k}{2}\rfloor$ ($H_1$ is the smaller of the
 two components). By Lemma \ref{a1}, we can state that $H_1 \cong K_{k_1}, \quad H[S] \cong K_{k}, \quad H_2 \cong
 K_{n-k-k_1}$ and each vertex in $S$ is directly connected to all the vertices
 in $H_1 \cup H_2$. Thus, the graph $H$ has the structure $H \cong K_{k} \vee \left(K_{k_1} \cup K_{n-k-k_1} \right)$, where $\vee$ denotes the
 join of graphs. This structure ensures maximal connectivity and pairwise proximity, thereby maximizing the hyper-Zagreb
 index. If $k_1=1$, then we have $H\cong \left(K_1 \cup K_{n-k-1}\right) \vee K_{k}$ and hence
 \begin{align*}
 HM(G)&\leq HM(H)=HM(K_{k} \vee \left(K_1 \cup K_{n-k-1}\right)\\
 &=4\,{k\choose 2}\,(n-1)^2+4\,{n-k-1\choose 2}\,(n-2)^2+k\,(n-k-1)\,(2n-3)^2+k\,(n+k-1)^2
 \end{align*}
 with equality iff $G\cong (K_1\cup K_{n-k-1})\vee K_k$. The result holds in (\ref{p1kd1}).

 \vspace*{3mm}

 Otherwise, $2\leq k_1\leq \lfloor\frac{n-k}{2}\rfloor$. One can easily see that the degree $d_i$ of vertex $v_i$ is given by:
 $$d_i =
 \begin{cases}
 n- 1 & \text{for } v_i \in S, \\[1mm]
 k_1 +k- 1 & \text{for } v_i \in V(H_1), \\[1mm]
 n-k_1-1 & \text{for } v_i \in V(H_2).
 \end{cases}$$
 Let $k_2=n-k-k_1$. Then $k_2\geq k_1$. We have $HM(G)\leq HM(H)=HM\Big((K_{k_1}\cup K_{k_2})\vee
 K_k\Big)$. We now prove the following claim.

 \vspace*{3mm}

 \noindent
 \begin{claim} \label{h1}
  $$HM\Big((K_{k_1-1}\cup K_{k_2+1})\vee K_k\Big)>HM\Big((K_{k_1}\cup K_{k_2})\vee K_k\Big).$$
 \end{claim}

 \noindent
 {\bf Proof of Claim \ref{h1}.} Now,
 \begin{align}
 &HM\Big((K_{k_1-1}\cup K_{k_2+1})\vee K_k\Big)-HM\Big((K_{k_1}\cup K_{k_2})\vee K_k\Big)\nonumber\\[2mm]
 =&4\,{k_1-1\choose 2}\,(k+k_1-2)^2+4\,{k_2+1\choose 2}\,(k+k_2)^2+k\,(k_1-1)\,(2n-k_2-3)^2+k\,(k_2+1)\nonumber\\[2mm]
 &~~~~~~~~~~~~~~\times (2n-k_1-1)^2-4\,{k_1\choose 2}\,(k+k_1-1)^2-4\,{k_2\choose 2}\,(k+k_2-1)^2\nonumber\\[2mm]
 &~~~~~~~~~~~~~~-k\,k_1\,(2n-k_2-2)^2-k\,k_2\,(2n-k_1-2)^2\nonumber\\[2mm]
 =&2\,(k_1-1)\,\Big[(k_1-2)\,(k+k_1-2)^2-k_1\,(k+k_1-1)^2\Big]+2\,k_2\,\Big[(k_2+1)\,(k+k_2)^2-(k_2-1)\nonumber\\[2mm]
 &(k+k_2-1)^2\Big]+k\,\Big[(k_1-1)\,(2n-k_2-3)^2-k_1\,(2n-k_2-2)^2\Big]+k\,\Big[(k_2+1)\,(2n-k_1-1)^2\nonumber\\[2mm]
 &~~~~~~~~~~~~~~~-k_2\,(2n-k_1-2)^2\Big]\nonumber\\[2mm]
 =&2\,(k_1-1)\,\Big[-2\,(k+k_1-2)^2-k_1\,(2k+2k_1-3)\Big]+2\,k_2\,\Big[2\,(k+k_2)^2+(k_2-1)\,(2k+2k_2-1)\Big]\nonumber\\[2mm]
 &+k\,\Big[-(2n-k_2-3)^2-k_1\,(4n-2k_2-5)+(2n-k_1-1)^2+k_2\,(4n-2k_1-3)\Big].\label{1sm1}
 \end{align}
 Since $k_2\geq k_1$, we obtain
 $$(2n-k_1-1)^2>(2n-k_2-3)^2,~\mbox{ and }~k_2\,(4n-2k_1-3)>k_1\,(4n-2k_2-5).$$
 Thus we have
 $$-(2n-k_2-3)^2-k_1\,(4n-2k_2-5)+(2n-k_1-1)^2+k_2\,(4n-2k_1-3)>0.$$
 Applying the above result in (\ref{1sm1}), we get
 \begin{align}
 &HM\Big((K_{k_1-1}\cup K_{k_2+1})\vee K_k\Big)-HM\Big((K_{k_1}\cup K_{k_2})\vee K_k\Big)\nonumber\\[2mm]
 >&2\,(k_1-1)\,\Big[-2\,(k+k_1-2)^2-k_1\,(2k+2k_1-3)\Big]+2\,k_2\,\Big[2\,(k+k_2)^2+(k_2-1)\,(2k+2k_2-1)\Big]\nonumber\\[2mm]
 >&2\,(k_1-1)\,\Big[-2\,(k+k_1-2)^2-k_1\,(2k+2k_1-3)+2\,(k+k_2)^2+(k_2-1)\,(2k+2k_2-1)\Big].\label{1sm2}
 \end{align}
 If $k_2\geq k_1+1$, then
 \begin{align*}
 &(k+k_2)^2>(k+k_1-2)^2,~~(k_2-1)\,(2k+2k_2-1)>k_1\,(2k+2k_1-3),~\mbox{ and hence }\\
 &-2\,(k+k_1-2)^2-k_1\,(2k+2k_1-3)+2\,(k+k_2)^2+(k_2-1)\,(2k+2k_2-1)>0.
 \end{align*}
 Using the above result, we obtain $HM\Big((K_{k_1-1}\cup K_{k_2+1})\vee K_k\Big)>HM\Big((K_{k_1}\cup K_{k_2})\vee
 K_k\Big)$. Otherwise, $k_2=k_1$. From (\ref{1sm2}), we obtain
 \begin{align*}
 &HM\Big((K_{k_1-1}\cup K_{k_2+1})\vee K_k\Big)-HM\Big((K_{k_1}\cup K_{k_2})\vee K_k\Big)\nonumber\\[2mm]
 >&2\,(k_1-1)\,\Big[-2\,(k+k_1-2)^2-k_1\,(2k+2k_1-3)+2\,(k+k_1)^2+(k_1-1)\,(2k+2k_1-1)\Big]\nonumber\\[2mm]
 =&2\,(k_1-1)\,(6k+8k_1-7)>0.
 \end{align*}
 Hence $HM\Big((K_{k_1-1}\cup K_{k_2+1})\vee K_k\Big)>HM\Big((K_{k_1}\cup K_{k_2})\vee
 K_k\Big)$. Thus, the {\bf Claim \ref{h1}} is proved.

 \vspace*{3mm}

 By employing the {\bf Claim \ref{h1}} several times, we get
    $$HM\Big((K_{k_1}\cup K_{k_2})\vee K_k\Big)<HM\Big((K_{k_1-1}\cup K_{k_2+1})\vee
    K_k\Big)<\cdots<HM\Big((K_1\cup K_{n-k-1})\vee K_k\Big).$$
 Utilizing the above result, we get
   $$HM(G)\leq HM\Big((K_{k_1}\cup K_{k_2})\vee K_k\Big)<HM\Big((K_1\cup K_{n-k-1})\vee K_k\Big).$$
 Thus, the theorem is proved.
 \end{proof}

 \begin{theorem} Let $G$ be a graph of order $n$ with edge connectivity $k'$. Then
 \begin{align}
 HM(G)\leq 4\,{k'\choose 2}\,(n-1)^2+4\,{n-k'-1\choose 2}\,(n-2)^2+k'\,(n-k'-1)\,(2n-3)^2+k'\,(n+k'-1)^2\label{1kd1}
 \end{align}
 with equality iff $G\cong (K_1\cup K_{n-k'-1})\vee K_{k'}$.
 \end{theorem}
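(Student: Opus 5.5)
The plan is to reduce the statement to Theorem \ref{pp1}, using the Whitney inequality $\kappa(G)\leq \kappa'(G)$ and the monotonicity of the right-hand side of (\ref{p1kd1}) in the connectivity parameter.

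Let $f(n,k)$ denote the right-hand side of (\ref{p1kd1}); note that (\ref{1kd1}) is exactly $f(n,k')$. Let $k=\kappa(G)$. Then $k\leq k'$, and Theorem \ref{pp1} gives
\[
HM(G)\leq f(n,k),
\]
with equality iff $G\cong (K_1\cup K_{n-k-1})\vee K_k$. So it suffices to prove that $f(n,k)$ is strictly increasing in $k$ for $0\leq k\leq n-2$. (The case $G\cong K_n$, where $k=k'=n-1$, is handled separately; there the bound is trivial.) Once this holds:
\begin{itemize}
\item if $k<k'$, then $HM(G)\leq f(n,k)<f(n,k')$, so the inequality is strict;
\item if $k=k'$, the equality case of Theorem \ref{pp1} applies directly.
\end{itemize}
It remains to check that $(K_1\cup K_{n-k'-1})\vee K_{k'}$ has edge connectivity exactly $k'$. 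This holds because the isolated-side vertex has degree $k'$, and every edge cut has size at least $\min$ degree, which equals $k'$ since all other degrees are $\geq n-2\geq k'$.

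\textbf{Main obstacle: monotonicity of $f$.} Rather than differentiate $f$ as a polynomial, I would argue structurally. The graph $(K_1\cup K_{n-k-2})\vee K_{k+1}$ is obtained from $(K_1\cup K_{n-k-1})\vee K_k$ by adding the edge joining the single vertex $v$ to one vertex $w$ of $K_{n-k-1}$. The effect is that $w$ moves into the clique part:
\begin{itemize}
\item $w$ is then adjacent to every vertex;
\item $v$ is adjacent to all of $K_k\cup\{w\}$.
\end{itemize}
Hence the two graphs are isomorphic, and Lemma \ref{a1}$(ii)$ gives $f(n,k+1)>f(n,k)$ immediately. Before relying on this, I would verify the isomorphism carefully, including the degenerate cases $k=0$ and $n-k-1=1$. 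This avoids the messy algebra entirely.
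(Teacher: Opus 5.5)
Your proof is correct and has the same skeleton as the paper's: Whitney's inequality $\kappa(G)\le k'$, then Theorem \ref{pp1}, then monotonicity of the bound in the connectivity parameter. Where you differ is in how you prove the monotonicity.

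The paper treats the right-hand side as a real polynomial $h(y)$, differentiates it, and argues $h'(y)>0$ by regrouping terms. It then records only $h(k)\le h(k')$ and asserts the equality case without comment. Your route uses a structural fact instead. Adding the edge $vw$ to $(K_1\cup K_{n-k-1})\vee K_k$ gives exactly $(K_1\cup K_{n-k-2})\vee K_{k+1}$. This checks out, including the boundary cases $k=0$ and $k=n-2$; in the latter the result is $K_n$, and the formula indeed gives $f(n,n-1)=2n(n-1)^3=HM(K_n)$. Monotonicity is then a one-line consequence of Lemma \ref{a1}(ii).

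Your approach has three advantages over the paper's:
\begin{itemize}
\item it explains why the extremal graphs are nested;
\item it avoids sign bookkeeping, whereas the calculus route needs care with terms like $2(2y-1)[\cdots]$, which is negative at $y=0$;
\item it directly gives the strict inequality $f(n,k)<f(n,k')$ for $k<k'$, which the ``only if'' direction of the equality characterization needs and the paper leaves implicit.
\end{itemize}

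You also check that the extremal graph actually has edge connectivity $k'$, which the paper omits. However, your justification is wrong as stated. It is not true in general that every edge cut has at least $\delta$ edges: two cliques joined by a single edge is a counterexample. The general inequality $\kappa'\le\delta$ only gives the upper bound $\kappa'\le k'$. For the lower bound, use $\kappa'\ge\kappa$ together with $\kappa((K_1\cup K_{n-k'-1})\vee K_{k'})=k'$. The latter holds because deleting fewer than $k'$ vertices leaves some vertex of $K_{k'}$, which is adjacent to everything remaining.
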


 \begin{proof} We consider a function
 \begin{align*}
 h(y)&=4\,{y\choose 2}\,(n-1)^2+4\,{n-y-1\choose 2}\,(n-2)^2+y\,(n-y-1)\,(2n-3)^2+y\,(n+y-1)^2\\
 &=2\,y(y-1)\,(n-1)^2+2(n-y-1)\,(n-y-2)\,(n-2)^2+y\,(n-y-1)\,(2n-3)^2\\
 &~~~~~~~~~~~~~~~~~~~~~~~~~~~~~~~~~~~~~~~~+y\,(n+y-1)^2,~~y\leq k'.
 \end{align*}
 Then
 \begin{align*}
 h'(y)&=2\,(y-1)\,(n-1)^2+2y\,(n-1)^2-2\,(n-y-2)\,(n-2)^2-2\,(n-y-1)\,(n-2)^2\\
 &~~~~~~~~~~~~~~~~~+(n-y-1)\,(2n-3)^2-y\,(2n-3)^2+(n-y-1)^2+2y\,(n+y-1)\\
 &=2\,(2y-1)\,\Big[(n-1)^2-(n-1.5)^2\Big]+2\,(n-y-1)\,\Big[(n-1.5)^2-(n-2)^2\Big]\\
 &~~~~~+2\,(n-y-2)\,\Big[(n-1.5)^2-(n-2)^2\Big]+(n+y-1)^2+2y\,(n+y-1)\\
 &>0.
 \end{align*}
 So $h(y)$ is an increasing function on $y\leq k'$, and therefore
 $h(k)\leq h(k')$ as $k\leq k'$ (where $k$ is vertex connectivity). By Theorem \ref{pp1}, we get
 the result (\ref{1kd1}). Furthermore, the equality occurs iff $G\cong (K_1\cup K_{n-k'-1})\vee K_{k'}$.
 \end{proof}

\vspace*{6mm}

\section{Chemical applications of the $HM$ index.}
Topological indices are based on mathematical chemistry, so it is crucial to evaluate their chemical relevance in addition to their theoretical aspects. Researchers discovered that $HM$ strongly explains different physicochemical properties, including those of octanes, monocarboxylic acids, and the boiling point of benzenoid hydrocarbons.\cite{hayat2023,rajasekharaiah20,mirajkar21}
Recent studies have explored the chemical applications of topological indices.\cite{babai23,hayat24,tratnik18}

Building upon these findings, this section examines the chemical applicability of $HM$ by constructing linear, quadratic, and cubic regression models for standard enthalpy of vaporization ($DHVAP$) of nonane isomers (all C$_9$H$_{20}$). Table \ref{T11} presents the theoretical $HM$ values along with the $DHVAP$ for nonane isomers. The experimental $DHVAP$ values were collected from the literature.\cite{mondal2024}  Since all nonane isomers have identical molecular size ($n = 9$, $m = 8$), the observed structure–property relationships reflected genuine molecular structural information rather than size effects. To analyze the predictive ability of the $HM$ index, we use the following regression relations:
\begin{eqnarray}\label{EQN1}
Y=\ell_{1} X +\ell_{2},
\end{eqnarray}
\begin{eqnarray}\label{EQN2}
Y= m_{1} X^2 + m_{2}X + m_{3},
\end{eqnarray}
\begin{eqnarray}\label{EQN3}
Y= n_{1} X^3 + n_{2}X^2 + n_{3}X +n_{4},
\end{eqnarray}

where $Y$, $X$ represent property and descriptor, respectively, and $\ell_{1}$, $\ell_{2}$, $m_{1}$, $m_{2}$, $m_{3}$, $n_{1}$, $n_{2}$, $n_{3}$, and $n_{4}$ are fitting parameters. Along with relations \eqref{EQN1}, \eqref{EQN2}, and \eqref{EQN3}, we also take into account some other statistical parameters such as the coefficient of determination ($R^2$),  $F$-test ($F$), root mean square error ($RMSE$), and significance F ($SF$) to better understand the  regression relationships.

\begin{table}[H]
 \caption{Theoretical $HM$ and experimental $DHVAP$ for nonane isomers.}\label{T11}
\centering
\footnotesize
\begin{tabular}{lcccccccccccc}
\toprule
Nonanes&    $HM$&  $DHVAP$&    Nonanes&    $HM$&  $DHAVP$\\[0.7ex]
\midrule
C9:1&   114& 46.44&  C9:19&  192& 42.23\\[0.5ex]
C9:2&   130& 44.65&  C9:20&  170& 42.93\\[0.5ex]
C9:3&   132& 44.75&  C9:21&  166& 41.42\\[0.5ex]
C9:4&   132& 44.75&  C9:22&  188& 40.84\\[0.5ex]
C9:5&   168& 42.28&  C9:23&  194& 42.28\\[0.5ex]
C9:6&   150& 43.79&  C9:24&  152& 43.84\\[0.5ex]
C9:7&   148& 42.87&  C9:25&  150& 42.98\\[0.5ex]
C9:8&   148& 42.87&  C9:26&  176&  43.04\\[0.5ex]
C9:9&   146& 42.82&  C9:27&  154& 43.95\\[0.5ex]
C9:10&  172& 42.66&  C9:28&  234& 41\\[0.5ex]
C9:11&  152& 43.84&  C9:29&  208& 41\\[0.5ex]
C9:12&  150& 42.98&  C9:30&  222& 38.1\\[0.5ex]
C9:13&  172& 42.66&  C9:31&  212& 41.75\\[0.5ex]
C9:14&  134& 44.81&  C9:32&  192& 42.02\\[0.5ex]
C9:15&  134& 44.81&  C9:33&  196&  42.55\\[0.5ex]
C9:16&  190&  41.91&  C9:34&  170& 42.93\\[0.5ex]
C9:17&  186& 40.57&  C9:35&  180& 43.36\\[0.5ex]
C9:18&  184& 40.17&&&&&\\        [0.7ex]
 \bottomrule
\end{tabular}
\end{table}

\vspace*{5mm}

For $DHVAP$, the equation \eqref{EQN1} takes the following form:
\begin{eqnarray}\label{EQN4}
DHVAP = -0.0473(HM)+50.7080,~~~ R^{2}=0.7045,~~~RMSE=0.8563,\\
\nonumber F=78.6630,~~~SF=2.99 \times 10^{-10}.
\end{eqnarray}
Similarly, the equation \eqref{EQN2} takes the following form:
\begin{eqnarray}\label{EQN5}
DHVAP =0.000228(HM^2)-0.1260(HM)+57.3117,\\
\nonumber R^{2}=0.7227,~~~RMSE=0.8296,~~~F=41.6912,~~~SF=1.23\times 10^{-09}.
\end{eqnarray}
Finally, the equation \eqref{EQN3} takes the following form:
\begin{eqnarray}\label{EQN6}
DHVAP =-4.42\times 10^{-06}(HM^3)+0.002544(HM^2)-0.5218(HM)+79.3692,\\
\nonumber R^{2}=0.731,~~~RMSE=0.8170,~~~F=28.0775,~~~SF=5.71\times 10^{-09}.
\end{eqnarray}

\noindent
\vspace*{1mm}
Figure \ref{dh1} shows the linear fit for model \eqref{EQN4}, Figure \ref{dh2} displays the quadratic fit for model \eqref{EQN5}, and Figure \ref{dh3} shows the cubic fit for model \eqref{EQN6}. The blue, purple, and orange lines indicate the best-fit lines corresponding to the linear, quadratic, and cubic models, respectively, and the green circles indicate the ordered pairs $(x, y)$, where $x$ and $y$ are the $HM$ index and $DHVAP$ of nonane isomers, respectively.

To evaluate the performance of the $HM$ index in different regression models, we analyze relations \eqref{EQN4}--\eqref{EQN6} and Figures \ref{dh1}--\ref{dh3}. Based on the linear model \eqref{EQN4}, the $HM$ index accounts for $70\%$ of the variance in $DHVAP$, as visualized in Figure \ref{dh1}. The quadratic model \eqref{EQN5} shows that $HM$ explains $72\%$ of the variance in $DHVAP$, as illustrated in Figure \ref{dh2}. Moreover, the cubic model \eqref{EQN6} indicates that $HM$ explains $73\%$ of the variance in $DHVAP$, as visualized in Figure \ref{dh3}. The corresponding $SF$ values are much smaller than $0.05$, confirming the statistical significance of the models. Collectively, these results demonstrate that $HM$ effectively models $DHVAP$ for nonane isomers, with the observed structure–property relationships reflecting genuine structural information rather than size effects due to their identical molecular size ($n = 9$, $m = 8$).

\vspace*{1mm}
\begin{figure}[H]
\includegraphics[width=9cm]{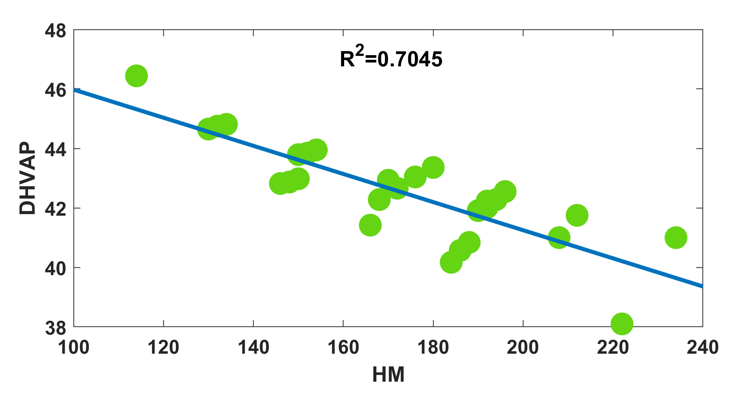}
\centering \caption{Linear fitting of $HM$ with $DHVAP$ for nonane isomers.} \label{dh1}
\end{figure}
 \vspace*{3mm}
\begin{figure}[H]
\includegraphics[width=9cm]{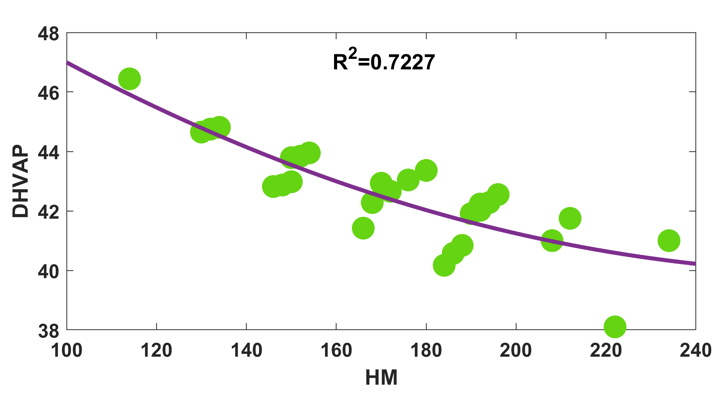}
\centering \caption{Quadratic fitting of $HM$ with $DHVAP$ for nonane isomers.} \label{dh2}
\end{figure}
\begin{figure}[H]
\includegraphics[width=9cm]{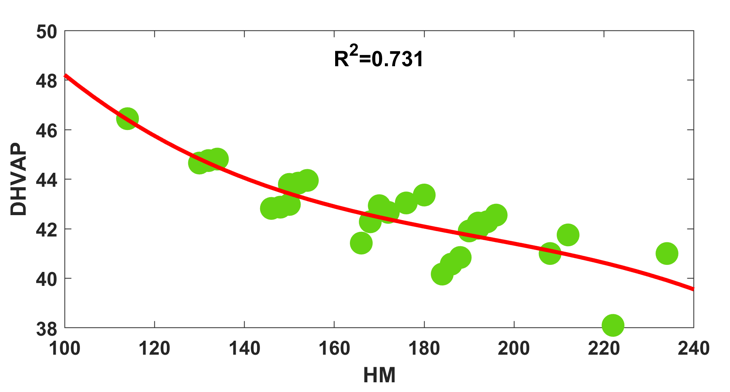}
\centering \caption{Cubic fitting of $HM$ with $DHVAP$ for
 nonane isomers.} \label{dh3}
\end{figure}
\vspace*{3mm}
\begin{table}[H]
\caption{Definitions and references of degree-based topological indices used in this study.}\label{T1000}
\centering
\begin{tabular}{lll}
\toprule
\textbf{Index} & \textbf{Definition} & \textbf{Reference} \\
\midrule
$M_1$ (first Zagreb) & $\displaystyle\sum_{v_iv_j\in E(G)} (d_i+d_j)$ & \cite{gut72} \\
$M_2$ (second Zagreb) & $\displaystyle\sum_{v_iv_j\in E(G)} d_i d_j$ & \cite{gut72} \\
$ALB$ (Albertson) & $\displaystyle\sum_{v_iv_j\in E(G)} |d_i-d_j|$ & \cite{albertson97} \\
$RM_2$ (modified second Zagreb) & $\displaystyle\sum_{v_iv_j\in E(G)} \frac{1}{d_i d_j}$ & \cite{milicevic04} \\
$PL$ (Platt) & $\displaystyle\sum_{v_iv_j\in E(G)} (d_i+d_j-2)$ & \cite{platt52,ali18} \\
$F$ (forgotten) & $\displaystyle\sum_{v_iv_j\in E(G)} (d_i^2+d_j^2)$ & \cite{furtula15} \\
$ISI$ (inverse sum indeg) & $\displaystyle\sum_{v_iv_j\in E(G)} \frac{d_i d_j}{d_i+d_j}$ & \cite{vukicevic10,vukicevic10b} \\
$HM$ (hyper-Zagreb) & $\displaystyle\sum_{v_iv_j\in E(G)} (d_i+d_j)^2$ & \cite{shirdel13} \\
\bottomrule
\end{tabular}
\end{table}
\vspace*{3mm}
\noindent
\begin{figure}[htp]
\centering
\begin{tikzpicture}[x=0.75cm,y=1.1cm]
\clip(-0.3,-1.56) rectangle (22.46,16.3);
\draw [line width=1.pt] (0.9,14.98)-- (7.9,14.98);
\draw [line width=1.pt] (9.,15.)-- (15.,15.);
\draw [line width=1.pt] (10.,16.)-- (10.,15.);
\draw [line width=1.pt] (16.,15.)-- (22.,15.);
\draw [line width=1.pt] (18.,16.)-- (18.,15.);
\draw (3.8,14.46) node[anchor=north west] {C8:1};
\draw (11.4,14.48) node[anchor=north west] {C8:2};
\draw (18.4,14.44) node[anchor=north west] {C8:3};
\draw [line width=1.pt] (10.,12.)-- (15.,12.);
\draw [line width=1.pt] (16.,12.)-- (21.,12.);
\draw [line width=1.pt] (12.,13.)-- (12.,12.);
\draw [line width=1.pt] (13.,13.)-- (12.,13.);
\draw [line width=1.pt] (17.,13.)-- (17.,12.);
\draw [line width=1.pt] (17.,11.)-- (17.,12.);
\draw (4.3,11.26) node[anchor=north west] {C8:4};
\draw (11.9,11.26) node[anchor=north west] {C8:5};
\draw (17.8,11.28) node[anchor=north west] {C8:6};
\draw [line width=1.pt] (2.,12.)-- (8.,12.);
\draw [line width=1.pt] (5.,13.)-- (5.,12.);
\draw [line width=1.pt] (3.,9.)-- (8.,9.);
\draw [line width=1.pt] (4.,10.)-- (4.,9.);
\draw [line width=1.pt] (9.,9.)-- (14.,9.);
\draw [line width=1.pt] (10.,9.)-- (10.,10.);
\draw [line width=1.pt] (12.,9.)-- (12.,8.);
\draw [line width=1.pt] (5.,9.)-- (5.,8.);
\draw [line width=1.pt] (15.,9.)-- (20.,9.);
\draw [line width=1.pt] (16.,10.)-- (16.,9.);
\draw [line width=1.pt] (19.,9.)-- (19.,8.);
\draw (5.5,8.26) node[anchor=north west] {C8:7};
\draw (10,8.26) node[anchor=north west] {C8:8};
\draw (16.84,8.26) node[anchor=north west] {C8:9};
\draw [line width=1.pt] (3.,6.)-- (8.,6.);
\draw [line width=1.pt] (9.,6.)-- (14.,6.);
\draw [line width=1.pt] (15.,6.)-- (19.,6.);
\draw [line width=1.pt] (5.,7.)-- (5.,6.);
\draw [line width=1.pt] (5.,6.)-- (5.,5.);
\draw [line width=1.pt] (11.,6.)-- (11.,7.);
\draw [line width=1.pt] (12.,6.)-- (12.,5.);
\draw [line width=1.pt] (16.,7.)-- (16.,6.);
\draw [line width=1.pt] (17.,6.)-- (17.,5.);
\draw [line width=1.pt] (16.,5.)-- (17.,5.);
\draw (5.36,5.28) node[anchor=north west] {C8:10};
\draw (9.9,5.28) node[anchor=north west] {C8:11};
\draw (17.32,5.28) node[anchor=north west] {C8:12};
\draw [line width=1.pt] (4.,3.)-- (8.,3.);
\draw [line width=1.pt] (9.,3.)-- (13.,3.);
\draw [line width=1.pt] (14.,3.)-- (18.,3.);
\draw [line width=1.pt] (6.,4.)-- (6.,3.);
\draw [line width=1.pt] (6.,3.)-- (6.,2.);
\draw [line width=1.pt] (7.,2.)-- (6.,2.);
\draw [line width=1.pt] (10.,3.)-- (10.,4.);
\draw [line width=1.pt] (10.,3.)-- (10.,2.);
\draw [line width=1.pt] (11.,3.)-- (11.,2.);
\draw [line width=1.pt] (15.,3.)-- (15.,4.);
\draw [line width=1.pt] (15.,3.)-- (15.,2.);
\draw [line width=1.pt] (17.,3.)-- (17.,2.);
\draw (4,2.26) node[anchor=north west] {C8:13};
\draw (11.3,2.26) node[anchor=north west] {C8:14};
\draw (15.2,2.26) node[anchor=north west] {C8:15};
\draw [line width=1.pt] (4.,0.)-- (8.,0.);
\draw [line width=1.pt] (9.,0.)-- (13.,0.);
\draw [line width=1.pt] (14.,0.)-- (17.,0.);
\draw [line width=1.pt] (5.,0.)-- (5.,1.);
\draw [line width=1.pt] (6.,-1.)-- (6.,0.);
\draw [line width=1.pt] (6.,1.)-- (6.,0.);
\draw [line width=1.pt] (10.,0.)-- (10.,1.);
\draw [line width=1.pt] (11.,0.)-- (11.,-1.);
\draw [line width=1.pt] (12.,0.)-- (12.,1.);
\draw [line width=1.pt] (15.,0.)-- (15.,1.);
\draw [line width=1.pt] (15.,0.)-- (15.,-1.);
\draw [line width=1.pt] (16.,0.)-- (16.,1.);
\draw [line width=1.pt] (16.,0.)-- (16.,-1.);
\draw (6.32,-0.74) node[anchor=north west] {C8:16};
\draw (11.3,-0.76) node[anchor=north west] {C8:17};
\draw (16.34,-0.72) node[anchor=north west] {C8:18};
\begin{scriptsize}
\draw [ball color=AmberDawn] (0.9,14.98) circle (2.5pt);
\draw [ball color=AmberDawn] (7.9,14.98) circle (2.5pt);
\draw [ball color=AmberDawn] (1.9,14.98) circle (2.5pt);
\draw [ball color=AmberDawn] (2.9,14.98) circle (2.5pt);
\draw [ball color=AmberDawn] (3.9,14.98) circle (2.5pt);
\draw [ball color=AmberDawn] (4.9,14.98) circle (2.5pt);
\draw [ball color=AmberDawn] (5.9,14.98) circle (2.5pt);
\draw [ball color=AmberDawn] (6.9,14.98) circle (2.5pt);
\draw [ball color=AmberDawn] (9.,15.) circle (2.5pt);
\draw [ball color=AmberDawn] (15.,15.) circle (2.5pt);
\draw [ball color=AmberDawn] (10.,15.) circle (2.5pt);
\draw [ball color=AmberDawn] (11.,15.) circle (2.5pt);
\draw [ball color=AmberDawn] (12.,15.) circle (2.5pt);
\draw [ball color=AmberDawn] (13.,15.) circle (2.5pt);
\draw [ball color=AmberDawn] (14.,15.) circle (2.5pt);
\draw [ball color=AmberDawn] (10.,16.) circle (2.5pt);
\draw [ball color=AmberDawn] (16.,15.) circle (2.5pt);
\draw [ball color=AmberDawn] (22.,15.) circle (2.5pt);
\draw [ball color=AmberDawn] (17.,15.) circle (2.5pt);
\draw [ball color=AmberDawn] (18.,15.) circle (2.5pt);
\draw [ball color=AmberDawn] (19.,15.) circle (2.5pt);
\draw [ball color=AmberDawn] (20.,15.) circle (2.5pt);
\draw [ball color=AmberDawn] (21.,15.) circle (2.5pt);
\draw [ball color=AmberDawn] (18.,16.) circle (2.5pt);
\draw [ball color=AmberDawn] (10.,12.) circle (2.5pt);
\draw [ball color=AmberDawn] (15.,12.) circle (2.5pt);
\draw [ball color=AmberDawn] (11.,12.) circle (2.5pt);
\draw [ball color=AmberDawn] (12.,12.) circle (2.5pt);
\draw [ball color=AmberDawn] (13.,12.) circle (2.5pt);
\draw [ball color=AmberDawn] (14.,12.) circle (2.5pt);
\draw [ball color=AmberDawn] (16.,12.) circle (2.5pt);
\draw [ball color=AmberDawn] (21.,12.) circle (2.5pt);
\draw [ball color=AmberDawn] (17.,12.) circle (2.5pt);
\draw [ball color=AmberDawn] (18.,12.) circle (2.5pt);
\draw [ball color=AmberDawn] (19.,12.) circle (2.5pt);
\draw [ball color=AmberDawn] (20.,12.) circle (2.5pt);
\draw [ball color=AmberDawn] (12.,13.) circle (2.5pt);
\draw [ball color=AmberDawn] (13.,13.) circle (2.5pt);
\draw [ball color=AmberDawn] (17.,13.) circle (2.5pt);
\draw [ball color=AmberDawn] (17.,11.) circle (2.5pt);
\draw [ball color=AmberDawn] (2.,12.) circle (2.5pt);
\draw [ball color=AmberDawn] (8.,12.) circle (2.5pt);
\draw [ball color=AmberDawn] (5.,13.) circle (2.5pt);
\draw [ball color=AmberDawn] (5.,12.) circle (2.5pt);
\draw [ball color=AmberDawn] (3.,12.) circle (2.5pt);
\draw [ball color=AmberDawn] (4.,12.) circle (2.5pt);
\draw [ball color=AmberDawn] (6.,12.) circle (2.5pt);
\draw [ball color=AmberDawn] (7.,12.) circle (2.5pt);
\draw [ball color=AmberDawn] (3.,9.) circle (2.5pt);
\draw [ball color=AmberDawn] (8.,9.) circle (2.5pt);
\draw [ball color=AmberDawn] (4.,10.) circle (2.5pt);
\draw [ball color=AmberDawn] (4.,9.) circle (2.5pt);
\draw [ball color=AmberDawn] (5.,9.) circle (2.5pt);
\draw [ball color=AmberDawn] (6.,9.) circle (2.5pt);
\draw [ball color=AmberDawn] (7.,9.) circle (2.5pt);
\draw [ball color=AmberDawn] (9.,9.) circle (2.5pt);
\draw [ball color=AmberDawn] (14.,9.) circle (2.5pt);
\draw [ball color=AmberDawn] (10.,9.) circle (2.5pt);
\draw [ball color=AmberDawn] (11.,9.) circle (2.5pt);
\draw [ball color=AmberDawn] (12.,9.) circle (2.5pt);
\draw [ball color=AmberDawn] (13.,9.) circle (2.5pt);
\draw [ball color=AmberDawn] (10.,10.) circle (2.5pt);
\draw [ball color=AmberDawn] (12.,8.) circle (2.5pt);
\draw [ball color=AmberDawn] (5.,8.) circle (2.5pt);
\draw [ball color=AmberDawn] (15.,9.) circle (2.5pt);
\draw [ball color=AmberDawn] (20.,9.) circle (2.5pt);
\draw [ball color=AmberDawn] (16.,10.) circle (2.5pt);
\draw [ball color=AmberDawn] (16.,9.) circle (2.5pt);
\draw [ball color=AmberDawn] (19.,9.) circle (2.5pt);
\draw [ball color=AmberDawn] (19.,8.) circle (2.5pt);
\draw [ball color=AmberDawn] (17.,9.) circle (2.5pt);
\draw [ball color=AmberDawn] (18.,9.) circle (2.5pt);
\draw [ball color=AmberDawn] (3.,6.) circle (2.5pt);
\draw [ball color=AmberDawn] (8.,6.) circle (2.5pt);
\draw [ball color=AmberDawn] (9.,6.) circle (2.5pt);
\draw [ball color=AmberDawn] (14.,6.) circle (2.5pt);
\draw [ball color=AmberDawn] (15.,6.) circle (2.5pt);
\draw [ball color=AmberDawn] (19.,6.) circle (2.5pt);
\draw [ball color=AmberDawn] (4.,6.) circle (2.5pt);
\draw [ball color=AmberDawn] (5.,6.) circle (2.5pt);
\draw [ball color=AmberDawn] (6.,6.) circle (2.5pt);
\draw [ball color=AmberDawn] (7.,6.) circle (2.5pt);
\draw [ball color=AmberDawn] (10.,6.) circle (2.5pt);
\draw [ball color=AmberDawn] (11.,6.) circle (2.5pt);
\draw [ball color=AmberDawn] (12.,6.) circle (2.5pt);
\draw [ball color=AmberDawn] (13.,6.) circle (2.5pt);
\draw [ball color=AmberDawn] (16.,6.) circle (2.5pt);
\draw [ball color=AmberDawn] (17.,6.) circle (2.5pt);
\draw [ball color=AmberDawn] (18.,6.) circle (2.5pt);
\draw [ball color=AmberDawn] (5.,7.) circle (2.5pt);
\draw [ball color=AmberDawn] (5.,5.) circle (2.5pt);
\draw [ball color=AmberDawn] (11.,7.) circle (2.5pt);
\draw [ball color=AmberDawn] (12.,5.) circle (2.5pt);
\draw [ball color=AmberDawn] (16.,7.) circle (2.5pt);
\draw [ball color=AmberDawn] (17.,5.) circle (2.5pt);
\draw [ball color=AmberDawn] (16.,5.) circle (2.5pt);
\draw [ball color=AmberDawn] (4.,3.) circle (2.5pt);
\draw [ball color=AmberDawn] (8.,3.) circle (2.5pt);
\draw [ball color=AmberDawn] (9.,3.) circle (2.5pt);
\draw [ball color=AmberDawn] (13.,3.) circle (2.5pt);
\draw [ball color=AmberDawn] (14.,3.) circle (2.5pt);
\draw [ball color=AmberDawn] (18.,3.) circle (2.5pt);
\draw [ball color=AmberDawn] (5.,3.) circle (2.5pt);
\draw [ball color=AmberDawn] (6.,3.) circle (2.5pt);
\draw [ball color=AmberDawn] (7.,3.) circle (2.5pt);
\draw [ball color=AmberDawn] (10.,3.) circle (2.5pt);
\draw [ball color=AmberDawn] (11.,3.) circle (2.5pt);
\draw [ball color=AmberDawn] (12.,3.) circle (2.5pt);
\draw [ball color=AmberDawn] (15.,3.) circle (2.5pt);
\draw [ball color=AmberDawn] (16.,3.) circle (2.5pt);
\draw [ball color=AmberDawn] (17.,3.) circle (2.5pt);
\draw [ball color=AmberDawn] (6.,4.) circle (2.5pt);
\draw [ball color=AmberDawn] (6.,2.) circle (2.5pt);
\draw [ball color=AmberDawn] (7.,2.) circle (2.5pt);
\draw [ball color=AmberDawn] (10.,4.) circle (2.5pt);
\draw [ball color=AmberDawn] (10.,2.) circle (2.5pt);
\draw [ball color=AmberDawn] (11.,2.) circle (2.5pt);
\draw [ball color=AmberDawn] (15.,4.) circle (2.5pt);
\draw [ball color=AmberDawn] (15.,2.) circle (2.5pt);
\draw [ball color=AmberDawn] (17.,2.) circle (2.5pt);
\draw [ball color=AmberDawn] (4.,0.) circle (2.5pt);
\draw [ball color=AmberDawn] (8.,0.) circle (2.5pt);
\draw [ball color=AmberDawn] (5.,0.) circle (2.5pt);
\draw [ball color=AmberDawn] (6.,0.) circle (2.5pt);
\draw [ball color=AmberDawn] (7.,0.) circle (2.5pt);
\draw [ball color=AmberDawn] (9.,0.) circle (2.5pt);
\draw [ball color=AmberDawn] (13.,0.) circle (2.5pt);
\draw [ball color=AmberDawn] (14.,0.) circle (2.5pt);
\draw [ball color=AmberDawn] (17.,0.) circle (2.5pt);
\draw [ball color=AmberDawn] (10.,0.) circle (2.5pt);
\draw [ball color=AmberDawn] (11.,0.) circle (2.5pt);
\draw [ball color=AmberDawn] (12.,0.) circle (2.5pt);
\draw [ball color=AmberDawn] (15.,0.) circle (2.5pt);
\draw [ball color=AmberDawn] (16.,0.) circle (2.5pt);
\draw [ball color=AmberDawn] (5.,1.) circle (2.5pt);
\draw [ball color=AmberDawn] (6.,-1.) circle (2.5pt);
\draw [ball color=AmberDawn] (6.,1.) circle (2.5pt);
\draw [ball color=AmberDawn] (10.,1.) circle (2.5pt);
\draw [ball color=AmberDawn] (11.,-1.) circle (2.5pt);
\draw [ball color=AmberDawn] (12.,1.) circle (2.5pt);
\draw [ball color=AmberDawn] (15.,1.) circle (2.5pt);
\draw [ball color=AmberDawn] (15.,-1.) circle (2.5pt);
\draw [ball color=AmberDawn] (16.,1.) circle (2.5pt);
\draw [ball color=AmberDawn] (16.,-1.) circle (2.5pt);

\end{scriptsize}
\end{tikzpicture}
\caption{Molecular graphs of octane isomers.}
\label{OCTANE}
\end{figure}


\section{Degeneracy}
Topological indices are essential for establishing structure-property  relationships as well as for distinguishing isomers. The ability of a topological index to discriminate between isomers is crucial for the encoding and computational processing of chemical structures.
Konstantinova introduced sensitivity ($ST$) to measure discriminative capability using the definition:
\begin{equation}
S_A = \frac{N - N_A}{N},
\end{equation}
where $N$ and $N_A$ represent the total number of isomers and the total number of isomers that the index $A$ cannot differentiate, respectively.\cite{kon96} An $S_A$ value closer to 1 indicates a higher discriminative power of the index.

\vspace*{2mm}
\begin{table}[H]
 \caption{The topological indices ($HM$, $RM_2$, $M_1$, $M_2$, $F$, $PL$, $ALB$, $ISI$) values for octanes.}\label{T100}
\centering
\small
\begin{tabular}{lcccccccc}
\toprule
Octanes& $HM$& $RM_2$& $M_1$& $M_2$& $F$&  $PL$& $ALB$& $ISI$\\
\midrule
C8:1& 98 & 2.25 & 26 & 24 & 50  & 12 & 2& 6.3333\\[0.9ex]
C8:2& 114 & 2.0833 & 28 & 26 & 62   & 14 & 6& 6.3667\\[0.9ex]
C8:3 &116& 2.1667 & 28 & 27 & 62   & 14 & 6& 6.4833\\[0.9ex]
C8:4&  116 & 2.1667 & 28 & 27 & 62   & 14 & 6& 6.4833\\[0.9ex]
C8:5&  118 & 2.25 & 28 & 28 & 62  & 14 & 6& 6.6\\[0.9ex]
C8:6& 152 & 1.875 & 32 & 30 & 92  & 18 & 12& 6.4\\[0.9ex]
C8:7& 134 & 2.0278 & 30 & 30 & 74 & 16 & 8& 6.6167\\[0.9ex]
C8:8& 132 & 2 & 30 & 29 & 74  & 16 & 10& 6.5167\\[0.9ex]
C8:9& 130 & 1.9167 & 30 & 28 & 74  & 16 & 10& 6.4\\[0.9ex]
C8:10& 156 & 2 & 32 & 32 & 92  & 18 & 12& 6.6\\[0.9ex]
C8:11& 136 & 2.1111 & 30 & 31 & 74   & 16 & 8& 6.7333\\[0.9ex]
C8:12& 136 & 2.1111 & 30 & 31 & 74   & 16 & 8& 6.7333\\[0.9ex]
C8:13& 160 & 2.125 & 32 & 34 & 92   & 18 & 12& 6.8\\[0.9ex]
C8:14& 174 & 1.8333 & 34 & 35 & 104   & 20 & 14& 6.7310\\[0.9ex]
C8:15& 168 & 1.7083 & 34 & 32 & 104  & 20 & 16& 6.4333\\[0.9ex]
C8:16& 176 & 1.875 & 34 & 36 & 104   & 20 & 14& 6.8143\\[0.9ex]
C8:17& 152 & 1.8889 & 32 & 33 & 86   & 18 & 10& 6.75\\[0.9ex]
C8:18& 214 & 1.5625 & 38 & 40 & 134  & 24  & 18& 6.8\\[0.9ex]
\bottomrule
\end{tabular}
\end{table}

\vspace*{2mm}
In this investigation, we evaluate the discriminative capabilities of the $HM$ index alongside several established degree-based topological indices using octane isomers. Table \ref{T1000} lists the definitions and references for all considered indices. 
Table \ref{T100} presents the calculated values of these indices for all octane isomers, 
and Figure \ref{OCTANE} illustrates their molecular graphs.

 Figure~\ref{DEG} illustrates the discriminative sensitivity values calculated using Konstantinova's formula. The $HM$ index exhibits the highest sensitivity among all indices considered, with a value of $0.833$. Indices such as $M_2$, $RM_2$, and $ISI$ show moderately lower values ($0.722$), while $M_1$, $PL$, $F$, and $ALB$ exhibit substantially lower discriminative power ($0.333$--$0.444$). These observations indicate that the $HM$ index possesses strong structural discrimination capability and performs favorably compared with the other indices considered for distinguishing between isomeric molecular structures in QSPR studies.
 \vspace*{2mm}
 \begin{figure}[H]
\includegraphics[width=13cm]{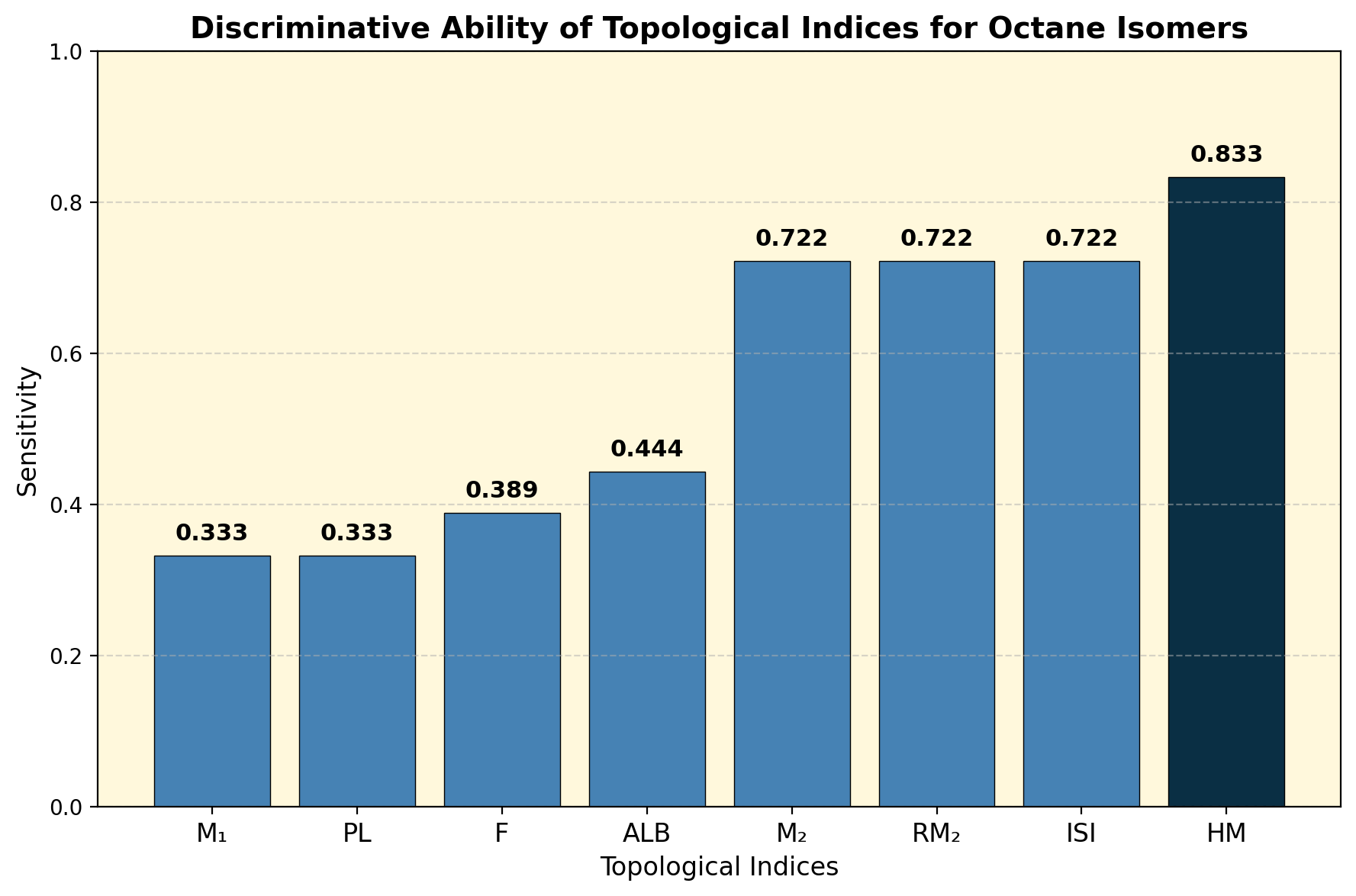 }
\centering \caption{Discriminative ability of topological indices for octanes.} \label{DEG}
\end{figure}

\section{Conclusions}
 In this paper, we analyzed the maximal graphs for $HM$ among all connected graphs, under constraints on the graph order along with a specified vertex-connectivity or edge-connectivity. These problems were previously addressed as open challenges in the literature.\cite{hayat2023} However, determining the minimum $HM$ index and characterizing the minimal graphs among connected graphs with fixed order and given vertex or edge connectivity remain open problems. We also investigated graphs with a specified number of leaves, chromatic number, or independence number that maximize $HM$. Furthermore, the applicability of $HM$ was investigated through the analysis of its structure–property relationships.We observed that $HM$ effectively predicts the $DHVAP$ of nonane isomers through linear, quadratic, and cubic regression models. Additionally, the discriminative power of $HM$ compares favorably with several established degree-based indices when applied to octane isomers.









\vspace*{4mm}

\baselineskip=0.2in

\begin{thebibliography}{99}

\bibitem{wiener47} H. Wiener, Structural determination of paraffin boiling points, {\it J. Am. Chem. Soc.} {\bf 69} (1947) 17--20.
\bibitem{das1666} K.C.~Das, Comparison between Zagreb eccentricity indices and the eccentric connectivity index, the second geometric--arithmetic index and the Graovac--Ghorbani index, {\it Croat.\ Chem.\ Acta} {\bf 89} (2016) 505--510.

\bibitem{dong11}
H.~Dong, B.~Zhou, N.~Trinajstić,
A novel version of the edge--Szeged index,
{\it Croat.\ Chem.\ Acta} {\bf 84} (2011) 543--545.

\bibitem{Heliyon2020}M.C. Shanmukha, N.S. Basavarajappa, K.C. Shilpa, A. Usha, Degree-based topological indices on anticancer drugs with QSPR analysis, {\it Heliyon} {\bf 6} (2020) e04235.

\bibitem{lucic09} B. Lu\v{c}i\'c, N. Trinajsti\'c, B. Zhou, Comparison between the sum-connectivity index and product-connectivity index for benzenoid hydrocarbons, {\it Chem. Phys. Lett.} {\bf 475} (2009) 146--148.

\bibitem{Shanmukha244} M.C. Shanmukha, A. Usha, V.R. Kulli, K.C. Shilpa, Chemical applicability and curvilinear regression models of vertex-degree-based topological index: Elliptic Sombor index, {\it Int. J. Quantum Chem.} {\bf 124} (2024) e27376.


\bibitem{gut72} I. Gutman, N. Trinajsti\'{c}, Graph theory and molecular orbitals. Total $\pi$-electron energy of alternant hydrocarbons, {\it Chem. Phys. Lett.\/} {\bf 17} (1972) 535--538.

\bibitem{liu066} B. Liu, I. Gutman, Upper bounds for Zagreb indices of connected graphs, {\it MATCH Commun. Math. Comput. Chem.\/} {\bf 55} (2006) 439--446.

\bibitem{das044} K.C. Das, I. Gutman, Some properties of the second Zagreb index, {\it MATCH Commun. Math. Comput. Chem.\/} {\bf 52} (2004) 103--112.

\bibitem{das100} K.C. Das, On comparing Zagreb indices of graphs, {\it MATCH Commun. Math. Comput. Chem.\/} {\bf 63}  (2010) 433--440.

\bibitem{gut15} I. Gutman, M. Kamran Jamil, N. Akhter, Graphs with fixed number of pendent vertices and minimal first Zagreb index, {\it Trans. Comb.\/} {\bf 4}  (2015) 43--48.

\bibitem{liu177} Z. Liu, Y. Chen, S. Li, The first Zagreb index, vertex-connectivity, minimum degree and independent number in graphs, {\it Int. J. Math. Combin.\/} {\bf 2} (2017) 34--42.
\bibitem{gut04} I. Gutman, K.C. Das, The first Zagreb index 30 years after, {\it MATCH Commun. Math. Comput. Chem.\/} {\bf 50} (2004) 83--92.


\bibitem{gutman13} I. Gutman, M. Goubko, Trees with fixed number of pendent vertices with minimal first Zagreb index, {\it Bull. Int. Math. Virtual Inst.\/} {\bf 3} (2013) 167--175.


\bibitem{liu102} M. Liu, B. Liu, The second Zagreb indices and Wiener polarity indices of trees with given degree sequence, {\it MATCH Commun. Math. Comput. Chem.\/} {\bf 67} (2012) 439--450.
\bibitem{nikolic03} S. Nikoli\'{c}, G. Kova\v{c}evi\'{c}, A. Mili\v{c}evi\'{c}, N. Trinajsti\'{c}, The Zagreb indices 30 years after, {\it Croat. Chem. Acta\/} {\bf 76}(2) (2003) 113--124.
\bibitem{xu11} K. Xu, The Zagreb indices of graphs with a given clique number, {\it Appl. Math. Lett.\/} {\bf 24} (2011) 1026--1030.
\bibitem{shirdel13} G.H. Shirdel, H. Rezapour, A.M. Sayadi, The hyper-Zagreb index of graph operations, {\it Iranian J. Math. Chem.\/} {\bf 4} (2013) 213--220.
\bibitem{mirajkar21} K.G. Mirajkar, B.R. Doddamani, H.B. Huchesh, On correlation of physicochemical properties and the hyper Zagreb index for some molecular structures, {\it South East Asian J. Math. Math. Sci.\/} {\bf 17} (3) (2021) 331--346.
\bibitem{hayat2023} S. Hayat, M.A. Khan, A. Khan, H. Jamil, M.Y.H. Malik, Extremal hyper-Zagreb index of trees of given segments with applications to regression modeling in QSPR studies, {\it Alexandria Eng. J.} {\bf 80} (2023) 259--268.

\bibitem{elumalai22} S. Elumalai, T. Mansour, A short note on Zagreb indices and hyper Zagreb indices of graphs, {\it Math. Rep.\/} {\bf 24} (2022).

\bibitem{elumalai18} S. Elumalai, T. Mansour, M.A. Rostami, New bounds on the hyper-Zagreb index for the simple connected graphs, {\it Electron. J. Graph Theory Appl.\/} {\bf 6} (2018) 166--177.
\bibitem{elumalai19} S. Elumalai, T. Mansour, M.A. Rostami, G.B.A. Xavier, A short note on hyper Zagreb index, {\it Bol. Soc. Paran. Mat.\/} {\bf 37} (2019) 51--58.


\bibitem{liu200} H. Liu, Z. Tang, Maximal hyper-Zagreb index of trees, unicyclic and bicyclic graphs with a given order and matching number, {\it Discrete Math. Lett.\/} {\bf 4} (2020) 11--18.


\bibitem{raza22} Z. Raza, S. Balachandran, S. Elumalai, A. Ali, On general sum–connectivity index of trees of fixed maximum degree and order, {\it MATCH Commun. Math. Comput. Chem.\/} {\bf 88} (2022) 643--658. 
\bibitem{vetrik} E. Swartz, T. Vetr\'{\i}k, General sum-connectivity index and general Randi\'{c} index of trees with given maximum degree, {\it Discrete Math. Lett.\/} {\bf 12} (2023) 181--188.
\bibitem{zhong20} L. Zhong, Q. Qian, The minimum general sum-connectivity index of trees with given matching number, {\it Bull. Malays. Math. Sci. Soc.\/} {\bf 43} (2020) 1527--1544.


\bibitem{vetrik23} T. Vetr\'{i}k, Degree-based function index for graphs with given diameter, {\it Discrete Appl. Math.\/} {\bf 333} (2023) 59--70.


\bibitem{walia24} M. Wali, J. Qian, C. Shi, Extremal graphs for vertex–degree–based indices with given matching number, {\it MATCH Commun. Math. Comput. Chem.\/} {\bf 91} (2024) 499--512.

\bibitem{rajasekharaiah20} G.V. Rajasekharaiah, U.P. Murthy, Hyper-Zagreb indices of graphs and its applications, {\it J. Algebra Comb. Discrete Appl.\/} {\bf 8}(1) (2020) 9--22.

\bibitem{babai23} A. Babai, S. Mondal, K.C. Das, Szeged indices of bicyclic graphs with applications as molecular descriptor, {\it Croat. Chem. Acta} {\bf 96} (2023) 153--170.

\bibitem{hayat24} S. Hayat, A. Khan, K. Ali, J.B. Liu, Structure-property modeling for thermodynamic properties of benzenoid hydrocarbons by temperature-based topological indices, {\it Ain Shams Eng. J.} {\bf 15} (2024) 102586.
\bibitem{tratnik18} N.~Tratnik, The Wiener polarity index of benzenoid systems and nanotubes,
{\it Croat.\ Chem.\ Acta} {\bf 91} (2018) 309--315.

\bibitem{mondal2024} S. Mondal, D. Huh, K.C. Das, Role of GA, AG and R in structure-property modelling, {\it Polycycl. Aromat. Comp.} (2024) e2405526.

\bibitem{kon96} E.V. Konstantinova, The discrimination ability of some topological and information distance indices for graphs of unbranched hexagonal systems, {\it J. Chem. Inf. Comput. Sci.\/} {\bf 36} (1996) 54--57.
\bibitem{albertson97} M.O. Albertson, The irregularity of a graph, {\it Ars Combin.\/} {\bf 46} (1997) 219--225.
\bibitem{milicevic04} A. Mili\'{c}evi\'{c}, S. Nikoli\'{c}, N. Trinajsti\'{c}, On reformulated Zagreb indices, {\it Mol. Divers.} {\bf 8} (2004) 393--399.
\bibitem{ali18} A. Ali, D. Dimitrov, On the extremal graphs with respect to bond incident degree indices, {\it Discrete Appl. Math.} {\bf 238} (2018) 32--40.
\bibitem{platt52}J.R. Platt, Prediction of isomeric differences in paraffin properties, {\it J. Phys. Chem.\/} {\bf 56} (1952) 328--336.
\bibitem{furtula15} B. Furtula, I. Gutman, A forgotten topological index, {\it J. Math. Chem.} {\bf 53} (2015) 1184--1190.
\bibitem{vukicevic10} D. Vukicevi\'c, M. Ga\v{s}perov, Bond additive modeling. 1. Adriatic indices, {\it Croat. Chem. Acta\/} {\bf 83} (2010) 243--260.
\bibitem{vukicevic10b} D. Vuki\v{c}evi\'c, Bond additive modeling 2. Mathematical properties of max-min rodeg index, {\it Croat. Chem. Acta\/} {\bf 83} (2010) 261--273.


\end{thebibliography}
\end{document}